\newif\ifconfver
\confverfalse     
\confvertrue        

\newif\ifplainver  
 \plainverfalse

\ifplainver
    \confverfalse   
\fi

\ifconfver
     \documentclass[10pt,twocolumn,twoside]{IEEEtran}
\else
    \ifplainver
        \documentclass[11pt]{article}
        \usepackage{fullpage}
    \else
        \documentclass[11pt,draftcls,onecolumn]{IEEEtran}
    \fi
\fi

\makeatletter
\newcommand{\settitle}{\@maketitle}
\makeatother


\usepackage{subfigure}
\usepackage[utf8]{inputenc}
\usepackage{bm, amsfonts, amsmath, amsthm, shortcuts_OPT}
\usepackage{bbm}
\usepackage{algorithm,algorithmic}
\usepackage{mathrsfs, enumitem}
\usepackage{tikz-network}
\usepackage{pstricks}
\usetikzlibrary{shapes}
\usepackage{pgfplots}
\usetikzlibrary{matrix}
\usetikzlibrary{calc}
\DeclareUnicodeCharacter{2212}{−}
\usepgfplotslibrary{groupplots,dateplot}
\usetikzlibrary{patterns,shapes.arrows}
\pgfplotsset{compat=newest}
\usepackage[colorlinks=true,
linkcolor=green!25!black,
urlcolor=blue,
citecolor=blue]{hyperref}       
\usepackage{cleveref}
\usepackage{graphicx} 
\usepackage{epstopdf}
\usepackage{booktabs}
\usepackage{colortbl}
\usepackage{multirow}

\newtheorem{Lemma}{Lemma}
\newtheorem{Prop}{Proposition}
\newtheorem{Prob}{Problem}
\newtheorem{Theorem}{Theorem}
\newtheorem{Def}{Definition}


\newtheorem{assumption}{H\!\!}
\newtheorem{Remark}{Remark}

\usepackage{cite}
\setlength{\parskip}{.2em}

\usepackage{tikz}


\usepackage{stackengine}
\usepgfplotslibrary{fillbetween}
\usepackage{stackengine}
\usepackage{stfloats}
\usepackage[font=normal]{caption}

\allowdisplaybreaks

\title{Detecting Low Pass Graph Signals via Spectral Pattern: Sampling Complexity and Applications}
\author{Chenyue Zhang, Yiran He, Hoi-To Wai\thanks{A preliminary version of this work has been presented at ICASSP 2021 \cite{he2021identifying}. 
The authors are with the Department of SEEM, The Chinese University of Hong Kong, Shatin, Hong Kong SAR of China. E-mails: \url{czhang@se.cuhk.edu.hk}, \url{yrhe@se.cuhk.edu.hk}, \url{htwai@se.cuhk.edu.hk}. This work is supported in part by CUHK Direct Grant \#4055135 and HKRGC Project \#24203520.}}
\date{\today}
                     
\begin{document}

{\let\newpage\relax\maketitle}


\begin{abstract}
This paper proposes a blind detection problem for low pass graph signals. Without assuming knowledge of the exact graph topology, we aim to detect if a set of graph signal observations are generated from a low pass graph filter. Our problem is motivated by the widely adopted assumption of low pass (a.k.a.~smooth) signals required by existing works in graph signal processing (GSP), as well as the longstanding problem of network dynamics identification. Focusing on detecting low pass graph signals on modular graphs whose cutoff frequency coincides with the number of clusters in the graph, we study and leverage the unique spectral pattern exhibited by such low pass graph signals to devise two detectors: one is based on Perron-Frobenius theorem, one is based on the K-means score. We analyze the sample complexity of these detectors considering the effects of graph filter's properties, random delays, and other parameters. We show novel applications of the blind detector on robustifying graph learning, identifying antagonistic ties in opinion dynamics, and detecting anomalies in power systems. Numerical experiments validate our findings.
\end{abstract}

\begin{IEEEkeywords}
low pass graph filters, blind detection, sampling complexity, spectral pattern
\end{IEEEkeywords}

\section{Introduction}
A growing  trend in  signal processing, machine learning, statistics is to develop tools for modeling and analyzing data defined on nodes of a graph, formally known as graph signals. Graphs encode the irregularly structured data and model the interactions between adjacent nodes. Such mathematical structure has found applications in social, financial, and biology networks \cite{newman2018networks}. An important problem is to understand the role of graph and its associated dynamics in data. While statistical methods such as graphical models \cite{wainwright2008graphical} have been developed, recently popular GSP models have offered a promising approach that appeals to a wide range of data \cite{intro_GSP1,sandryhaila2013discrete, isufi2022graph}. 

A fundamental building block of GSP is the linear graph filter which extends the classical linear time-invariant (LTI) filter from time domain to the node domain. It enables one to model graph signal as the output of a graph filter subject to excitation input. The graph filter acts as a black box that abstracts the complex network dynamics leading to the observations.
Using this modeling philosophy, GSP algorithms have been developed for signal sampling, interpolation, graph topology learning, etc.~\cite{intro_GSP1}. They have been successfully applied to data from social networks, financial networks \cite{ramakrishna2020user}, brain activity dynamics \cite{brainSingHuang}, and physical networks \cite{intro_learngraphsm2}.

Similar to LTI filters, linear graph filters may be classified as low pass, band pass, and high pass, in accord with its frequency response function \cite{sandryhaila2013discrete}. {In the GSP literature, low pass graph filters are usually assumed which lead to smooth graph signals that have similar signal values over adjacent nodes on the graph. Such \emph{low pass or smooth} signal assumption has been a critical condition used} for graph topology learning \cite{intro_learngraphsm, kalofolias2016learn, dong2019learning, mateos2019connecting}, blind community detection \cite{2018ToBlind, wai2022community, sbmSchaub_2020, roddenberry2020exact}, centrality estimation \cite{intro_blindcentralT, he2022detecting, he2023online}, denoising \cite{denoising}, sampling \cite{sampling}, graph neural networks \cite{intro_NT2019RevisitingGN, wu2019simplifying}.
{On one hand, the low pass assumption can be justified using dynamics models from physics and social science \cite{ramakrishna2020user}. On the other hand, without rigorous validation,
the validity of the low pass assumption can be questionable. Example scenarios include when the system is under attack \cite{sandryhaila2014discrete} or the observed data is corrupted \cite{ferrer2022volterra}.}

{To illustrate the risks in applying the above GSP works on non low pass data, e.g., corrupted data, we present a case study on graph topology learning from meteorology data using GL-SigRep \cite{intro_learngraphsm}. The `clean' dataset consists of the daily mean temperature in Netherlands from June 2020 to Feb 2023 on $N=35$ stations and $M=1000$ days of samples [available: \url{https://www.ecad.eu/}]. A corrupted dataset is formed by inserting noise into the clean dataset at random instances (see the caption of Fig.~\ref{fig:temptuedata}).
The GL-SigRep algorithm \cite{intro_learngraphsm} is applied to learn the weather station graph from the datasets. As the ground truth topology is unknown, we consider two evaluation metrics inspired by \cite{intro_learngraphsm}. We first compare the similarity between the learnt graphs and a proximity graph constructed by assigning an edge between two stations if their distance is less than 125km. The area under ROC (AUROC) scores are $0.8258$ (clean dataset), $0.4379$ (corrupted dataset).
The result for the clean dataset is consistent with the proximity graph, yet for the corrupted dataset, GL-SigRep may have learnt an erroneous graph.
Next, we apply spectral clustering on the learnt graphs where the number of clusters $K$ is determined by {minimum descriptive length (MDL) \cite{wax1985detection}}. As seen in Fig.~\ref{fig:temptuedata}, the clustering results from clean dataset are consistent with geography of the Netherlands which has low, flat lands in the west/north, and higher lands in the east/south; yet for corrupted dataset, the clustering result is inconclusive.}

{The above case study shows that applying GSP works on non low pass data without verifying the assumptions is risky as it may return inconclusive results that can harm downstream applications.     
As noted by \cite{dong2019learning,mateos2019connecting}, a possible solution is to utilize alternative signal models. However this may incur other issues. For example, although the spectral template method \cite{segarra2017network} does not require smooth graph signals for topology learning, it has strict requirements on the excitation model and involves complex optimization criteria.}

{Instead of inventing new signal models, this work aims to \emph{complement existing GSP works} by studying a {blind detection} problem which distinguishes if the given graph signals are low pass or not. Our work is motivated by two important aspects that have not been studied in the GSP community. {\sf(i)} We devise a procedure to validate the low pass property in graph data required by GSP tools. {\sf(ii)} We address a blind system identification problem by determining the type of graph filter followed by the graph data, and thus categorize the dynamics involved. As a preview  {application of our method}, we apply the said procedure (see Sec.~\ref{sec:ex}) to prescreen the corrupted dataset in Fig.~\ref{fig:temptuedata} and then run the GL-SigRep algorithm. The learnt graph achieves an AUROC score of $0.7597$ compared to the proximity graph, and produces a clustering result that is consistent with the geography of the Netherlands.}

\begin{figure*}[t]
\centering
\resizebox{0.32\linewidth}{!}{\sf \includegraphics{fig/fig1-1.png}}\hspace{-.2cm}
\resizebox{0.32\linewidth}{!}{\sf \includegraphics{fig/fig1-2.png}}\hspace{-.2cm}
\resizebox{0.32\linewidth}{!}{\sf \includegraphics{fig/fig1-3.png}}\vspace{-.2cm}
\caption{{\textbf{Graph topology learnt by GL-SigRep \cite{intro_learngraphsm} and clustering results by spectral clustering} on the (Left) clean dataset, (Middle) corrupted dataset, (Right) pre-screened dataset. The corrupted observations are generated as $10$ batches of ideal order-5 high pass signals, each with a duration $M_{\sf ba} = 50$, and they are injected into the dataset at random positions; see \eqref{eq:y-pollute}. `\texttt{X}' denotes the isolated nodes in the learnt graph. We set $m_{\sf batch} = 50$ and $\delta=0.65$ for the pre-screening procedure. The AUROCs compared to the proximity graph are $0.8258$ (Clean) / $0.4379$ (Corrupted) / $0.7597$ (Pre-screened). With the MDL criterion, we found $K=3$ clusters in the graphs learnt from clean and prescreened data.  
}}\vspace{-.3cm}
\label{fig:temptuedata}
\end{figure*}

Note that our blind detection problem is in general ill-posed due to the lack of exact graph topology knowledge. As a remedy, we concentrate on \emph{modular} graphs consisting of a few densely connected components, i.e., a common case for social, biology, finance, physical networks \cite{girvan2002community}. 
Our idea lies in observing a unique \emph{clustered spectral pattern} in the principal subspace spanned by low pass graph signals from modular graphs. Our contributions are:\vspace{-.1cm}
\begin{itemize}[leftmargin=*]
\item We develop blind detection methods for low pass graph signals through identifying the (dis)similarities between the low graph frequency subspace of the graph shift operators (GSO) and the principal subspace spanned by observed graph signals. 
We derive properties of the principal subspace spanned by low pass graph signals.
\item We derive finite-sample performance bounds for the blind detectors when the graph {\sf (i)} contains only one connected component with similar connection probability, {\sf (ii)} is modular and contains $K$ clusters following a stochastic block model (SBM) \cite{2011RoheSBM}. 
{Our result shows that the detection performance improves with more nodes and/or the graph filters have sharper cutoffs.}
We also give insights for the detection performance applied to non-SBM graphs.
\item We discuss applications utilizing the proposed detectors. First, we design a prescreening scheme to robustify graph learning through removing non low pass graph signals from the dataset prior to applying graph learning. Second, we derive models for opinion dynamics with antagonistic relationships \cite{dittrich2020signal, altafini2012consensus} and power systems under data injection attack \cite{Ramakrishnagrid, DrayerAC}.
Our detection method provides a data-driven evidence to discover such phenomena.
\end{itemize}
The proposed detection method takes its inspiration from recent works on blind graph feature learning. Examples are community detection \cite{2018ToBlind, wai2022community, sbmSchaub_2020, roddenberry2020exact}, centrality estimation \cite{intro_blindcentralT, he2022detecting, he2023online}, equitable partitions \cite{scholkemper2022blind}, etc. These works derive knowledge from the principal signal subspace under the premise that the observations are low pass.
In comparison, we take an \emph{inverse problem} perspective by inquiring if a given set of graph signals are low pass. 
Notably, our development involves showing a \emph{converse result} that the said spectral pattern cannot be found in non low pass signals.


Furthermore,
this paper makes a first step towards \emph{blind (topology-free) identification of unknown systems} \cite{tong1994blind} to \emph{determine the type of dynamics} on a network. For example, the presence of antagonistic relationship in social networks result in an opinion formation process that can be described as non low pass graph filter; in power systems, data injection attacks result in graph signals of nodal voltages that are not low pass. 
In comparison, prior works on graph filter identification either utilize controlled perturbation experiments \cite{intro_dynaUniversalitynet, wai2019joint}, or require full knowledge of the graph topology \cite{zhu2020estimating,segarra2016blind}. 


The rest of this paper is organized as follows.
In Sec.~\ref{sec:model}, we describe the observation model and the problem of detecting low pass graph filters. In Sec.~\ref{sec:det}, we propose blind detection algorithm for the low pass graph signals. 
In Sec.~\ref{sec:ana}, we analyze the sampling complexity of the proposed algorithms. Sec.~\ref{sec:ex} describes three new application examples for our algorithm. Sec.~\ref{sec:num} concludes the paper with numerical examples. 
Compared to the conference version \cite{he2021identifying}, we consider low pass graph filters with high cutoff frequencies as well as providing a complete performance analysis. We also studied applications on detecting the  network dynamics types.


\section{Problem Statement} \label{sec:model}

Consider an undirected, connected graph with $N$ nodes given by $G = (V,E)$, where $V = [N] = \{1,...,N\}$ is the node set and $E \subseteq V \times V$ is the set of edges. The (weighted) adjacency matrix of $G$ is a symmetric matrix defined by ${\bm A} = [ A_{ij} ]$ such that $A_{ij} > 0$ if $(i,j) \in E$, otherwise $A_{ij} = 0$. The normalized adjacency matrix is given by $\nA = {\bm D}^{-1/2} {\bm A} {\bm D}^{-1/2}$, where ${\bm D}$ is the diagonal matrix with the $i$th diagonal element $d_i = \sum_{j=1}^N A_{ij}$, and the Laplacian matrix and normalized Laplacian matrix is defined by $\Lap = {\bm D}-{\bm A}$ and ${\bm L}_{\sf norm} = {\bm I} - \nA$ respectively. 
A graph shift operator (GSO), denoted by $\GSO$, is a symmetric matrix such that $S_{ij} \neq 0$ only if $(i,j) \in E$ or $i=j$. For example, the (normalized) adjacency and Laplacian matrices are admissible GSOs. The GSO admits the eigenvalue decomposition $\GSO = \V \bm{\Lambda} \V^\top$, where the columns of $\V = ( \sv_1, \ldots, \sv_N )$ are the eigenvectors and $\bm{\Lambda} = {\rm Diag}( \lambda_1, ... ,\lambda_N )$ is a diagonal matrix of the eigenvalues known as the \emph{graph frequencies}. 
They are assumed to be distinct for simplicity, and 
are in ascending order as
$\lambda_1 < \lambda_2 < \cdots < \lambda_N$ for $\GSO \in \{ \Lap, \nL \}$; or in descending order as $\lambda_1 > \lambda_2 > \cdots > \lambda_N$ for $\GSO \in \{ {\bm A}, \nA \}$.

 {We concentrate on the case where $G$ is a \emph{modular graph} \cite{girvan2002community, newman2013spectral} with $K \ll N$ densely connected clusters\footnote{The number of clusters depends on the nature of the network, e.g., for social networks with $N=100$ to $1000$ nodes, typically $K \leq 5$ \cite{newman2013spectral}, for biological networks with $N=1000$ to $5000$ nodes, typically $K \leq 50$.}. Mathematically, the modularity of $G$ is characterized by the normalized cut: let ${\cal C}_1, \ldots, {\cal C}_K$ be a partition of $[N]$, 
\beq \label{eq:ncut}
{\sf NCut}( {\cal C}_1, \ldots, {\cal C}_K )= \sum_{k=1}^K \sum_{i \in {\cal C}_k} \sum_{j \notin {\cal C}_k } \frac{A_{ij}}{ \sum_{\ell \in {\cal C}_k} \sum_{m \in [N]} A_{\ell m}}. 
\eeq 
If ${\sf NCut}_K^\star := \min_{ {\cal C}_1, \ldots, {\cal C}_K } {\sf NCut}( {\cal C}_1, \ldots, {\cal C}_K ) \approx 0$, we say that the graph is $K$-modular. Importantly, it is known that when $\GSO = {\bm L}_{\sf norm}$, the condition ${\sf NCut}_K^\star \approx 0$ implies $\lambda_1, \ldots, \lambda_K \approx 0$ \cite{von2007tutorial, wai2022community}.
We notice that modular graphs are common in social, economics, biological networks \cite{girvan2002community}. On the other hand, the popular \emph{SBM} is a generative random graph model that lead to modular graphs with high probability \cite{deng2021strong}.}


We consider linear graph filter as a continuous matrix function. 
Let $T_{\sf ord} \in \NN \cup \{ \infty \}$ be the number of filter taps, a linear graph filter is described by \cite{sandryhaila2013discrete}:
\beq \textstyle \label{eq:gf}
\HS = \sum_{t=0}^{T_{\sf ord}} {h^{(t)}} \GSO^t = \V h( \bm{\Lambda} ) \V^\top = \U \bm{h} \U^\top,
\eeq
where we have defined the frequency response function as $h(\lambda) = \sum_{t=0}^{T_{\sf ord}} h^{(t)} \lambda^t$ and $h( \bm{\Lambda} )$ is a diagonal matrix with the entries $\{ h(\lambda_i) \}_{i=1}^N$. 
The diagonal matrix $\bm{h} := {\rm Diag} ( h_1, \ldots, h_N )$ is composed of the eigenvalues for $\HS$ sorted as $|h_1| \geq |h_2| \geq \cdots | h_N|$, such that  $\U$ is the corresponding column re-ordered version of $\V$.

Note that $|h(\lambda)|$ measures amplification or attenuation of the energy of a graph signal at frequency $\lambda$. We observe the following definition for low pass graph filters \cite{ramakrishna2020user}:
\begin{Def} \label{def:lpf}
A graph filter $\HS$ is said to be {$K$ low pass} if 
\beq 
\eta_K := \frac{ \max_{ i=K+1, \ldots, N } |h(\lambda_i)| }{ \min_{ i=1,\ldots,K }  | h(\lambda_i) | } < 1.
\eeq
\end{Def}
\noindent We refer our readers to the illustration in Fig.~\ref{fig:freq}. The smaller $\eta_K$ is, the `sharper' the low pass filter is. The integer $K \in \{1,\ldots, N-1\}$ represents the \emph{cut off frequency}. Notice that the graph frequencies can be unevenly distributed. 
Take $\GSO = \nL$ as an example, a low pass frequency response function $h(\lambda)$ (e.g., a decreasing function in $\lambda$) is usually $K$-low pass if the underlying graph is \emph{modular} with $K$ densely connected components. {Intuitively,} this is because one has $\lambda_1, ..., \lambda_K  {=} 0$ {in the ideal case} while $\lambda_{K+1}, ..., \lambda_N$ are bounded away from $0$ \cite{deng2021strong} and $h(\lambda)$ is continuous in $\lambda$.


\begin{figure}[t]
\centering
{\sf 
\resizebox{!}{.5\linewidth}{\input{fig/fig2}}
}\vspace{-0.1cm}
\caption{Frequency responses $h(\lambda_i)$ against eigenvalue $\lambda_i$. ({\color{green!50.0!black}Green}) ${\cal H}_1({\bm L})= ({\bm I}-0.1{\bm L})^{-1}$, ({\color{blue}Blue}) ${\cal H}_2({\bm L})= ({\bm I}+0.1{\bm L})^{-1}$ and ({\color{red}Red}) ${\cal H}_3({\bm L})= \exp(-0.5{\bm L})$. The red shadow region indicates $\mathbb{L}_{\sf Low}$ region and the black shadow region indicates $\mathbb{L}_{\sf High}$ region. Note that ${\cal H}_2({\bm L})$ and ${\cal H}_3({\bm L})$ are $K=3$-low pass filters while ${\cal H}_1({\bm L})$ is not [cf.~Definition~\ref{def:lpf}].}\vspace{-.4cm} \label{fig:freq}
\end{figure}

We consider the observation model where $M$ filtered graph signals are obtained according to:
\beq \label{eq:y_sec}
{\bm y}_m = \underbrace{\HS \cdots \HS}_{J_m~\text{times}} {\bm x}_m + {\bm w}_m,~m=1,...,M,
\eeq
such that $ {\bm x}_m \in \RR^N$ is the excitation (a.k.a.~input) signal and ${\bm w}_m \in \RR^N$ is the observation or modeling noise. 
{Moreover, the signal component of ${\bm y}_m$ is the result of cascading $J_m$ copies of $\HS$, which models the random delays in observations. This is motivated by the practical consideration where the delays may vary from sample to sample when acquiring the graph signals such as in rumor spreading; see \cite[Sec.~II-C]{zhu2020estimating}.} 
Eq.~\eqref{eq:y_sec} assumes a scenario that the observations/states on nodes following a dynamic process governed by the graph filter $\HS$.
As a convention, we say that ${\bm y}_m$ generated from \eqref{eq:y_sec} is a ($K$-)low pass graph signal with respect to (w.r.t.) $\GSO$ when $\HS$ is ($K$-)low pass. 

For simplified theoretical analysis, we assume that the observed signals are \emph{stationary} \cite{perraudin2017stationary}, i.e., the excitation ${\bm x}_m$ is zero-mean with covariance ${\bm C}_x = \EE[ {\bm x}_m {\bm x}_m^\top ] = \I$. The observation noise is zero-mean with $\EE[ {\bm w}_m {\bm w}_m^\top ] = \sigma^2 {\bm I}$. The delay $J_m$ is uniformly distributed over $\{1,...,J\}$. 
Our detection methods can be readily applied even when ${\bm C}_x \neq {\bm I}$ and is possibly low rank; see the supplementary material.

\subsection{Low Pass Graph Filter Detection Problem}
Given a set of graph signals $\{ {\bm y}_m \}_{m=1}^M$, we wish to determine if the signals are low pass or not w.r.t.~$\GSO$.  
As mentioned, the problem can be easily solved if the GSO $\GSO$ is known. For example, with noiseless observation, $J=1$, we can estimate $|h( \lambda_i )|^2$ for each $i=1,\ldots,N$ through the periodogram ${\bm v}_i^\top \EE[ {\bm y}_m {\bm y}_m^\top ] {\bm v}_i$ \cite{marques2017stationary} and verifying Definition~\ref{def:lpf} directly.

Our aim is to tackle the detection problem in a \emph{blind} setting where the GSO is unknown. This pertains to applications such as graph learning where the graph topology itself is the object to be estimated. However, the blind detection problem is \emph{ill-posed} in general: a set of graph signals can be simultaneously \emph{low pass} w.r.t.~a graph while \emph{not low pass} w.r.t.~another graph. 

\begin{figure}[t]
    \centering
    {\sf 
    \resizebox{.8\linewidth}{!}{\input{fig/fig3}}
    }\vspace{-0.2cm}
    \caption{Low pass (in {\color{green!50!black}green}), non low pass (in black) graph signals generated from $\HS$ on an SBM graph with $K=2$ clusters and $N=50$ nodes. For each graph signal, we plot its elements sorted in  ascending order. The \emph{low pass signals} exhibit a \emph{piecewise-constant} feature.}
    \vspace{-.4cm} \label{fig:y_m}
\end{figure}

Motivated by that many real networks graphs tend to be \emph{modular} \cite{girvan2002community} [cf.~\eqref{eq:ncut}], we inquire if such graphs can lead to any discernible pattern in the low pass and non low pass graph signals generated. Fig.~\ref{fig:y_m} shows the ensemble of low pass, non low pass signals generated on a modular (SBM) graph with $N=50$ nodes. Here, the low pass graph signals generated show a unique pattern \emph{not found} in the non low pass signals: upon a simple sorting, the graph signal is almost \emph{piecewise constant}. 
The demonstrated pattern suggests that \emph{under the premises of modular graph with $K \geq 1$ clusters, where $K$ is known, the blind detection problem for $K$ low pass graph signals can be solvable}. Formally, our main problem is:
\begin{Prob} \label{prob:det}
Given the number of clusters $K$, and a set of graph signals $\{ {\bm y}_m \}_{m=1}^M$ generated from \eqref{eq:y_sec}, we determine if the underlying graph filter $\HS$ is $K$-low pass. 
\end{Prob}
\noindent Denote $\Tgnd \in \{ {\cal T}_0, {\cal T}_1 \}$ as the ground truth hypothesis, where
\begin{itemize}[leftmargin=*, nosep]
    \item ${\cal T}_0$: the null hypothesis refers to when $\HS$ is $K$ low pass;
    \item ${\cal T}_1$: the alternate hypothesis refers to when $\HS$ is \emph{not} $K$ low pass (may be bandpass, high pass, etc.).
\end{itemize}
As an extension, one may wish to consider the alternate hypothesis ${\cal T}_1$ which includes \emph{any graph signals that are not generated by \eqref{eq:y_sec} with $K$-low pass filter} and null hypothesis ${\cal T}_0$ with $K$-low pass signals on non-modular graphs. 
Nevertheless, we found empirically that our detector(s) is effective in detecting signals that are not generated by \eqref{eq:y_sec}.



\section{Detecting Low Pass Graph Signals}\label{sec:det}
We begin our endeavor by analyzing the covariance matrix of \eqref{eq:y_sec}. Observe that
\beq \label{eq:cov_y}
\begin{split}
\CovYn 
& \textstyle = \EE[ {\bm y}_m {\bm y}_m^\top ] = \frac{1}{J} \sum_{ \tau =1}^J [\HS]^{2 \tau } 
+ \sigma^2 {\bm I},
\end{split}
\eeq
where the noiseless covariance is 
\beq \label{eq:cov_y_nl} \textstyle 
\CovY = \frac{1}{J} \sum_{\tau =1}^J [\HS]^{2\tau} = \V \left( \frac{1}{J} \sum_{\tau=1}^J h( \bm{\Lambda} )^{2 \tau} 
\right) \V^\top.
\eeq
Denote the \emph{top-$K$ eigenvectors} of $\CovY$ as $\barU_K$.
If $\HS$ is $K$ low pass, it is clear that $\barU_K$ spans the same subspace as ${\rm span}\{ \sv_1 , \ldots, \sv_K \}$, i.e., the eigenvectors of $\GSO$ corresponding to the $K$ lowest graph frequencies; otherwise if $\HS$ is \emph{not} $K$-low pass, the subspace spanned by $\barU_K$ contains one or more eigenvectors from $\{ \sv_{K+1} , \ldots, \sv_N \}$ of $\GSO$. 

Our idea to handle \Cref{prob:det} is to verify \emph{whether the $K$-dimensional principal subspace of ${\bm C}_y$ coincides with a particular structure in ${\rm span} \{ \sv_1 , \ldots, \sv_K \}$} which we refer to as the \emph{spectral pattern} of low pass signals. 
In the sequel, we show that these structures can be exposed upon careful observations. 
The study is divided into two cases depending on the number of clusters in the graph.
\vspace{.1cm}

\noindent \underline{\textbf{Case of $K=1$:}}
We first focus on the case when it is known that the graph $G$ is \emph{non-modular} (with $K=1$). \Cref{prob:det} in this case aims at detecting graph signals that are $1$-low pass. The development hinges on a well-known property for $\sv_1$:
\begin{Prop}
\label{Prop:lap}
Let $G$ be connected.
    The eigenvector corresponding to the lowest graph frequency, $\sv_1$, is the only positive eigenvector\footnote{Note that both $\sv_1, -\sv_1$ are eigenvectors with the eigenvalue $\lambda_1$. We assume $\sv_1 > {\bm 0}$ to avoid such ambiguity.} of $\nL$, ${\bm L}$, $\nA$, ${\bm A}$. For $i \geq 2$, the eigenvector $\sv_i$ must have  at least one positive and negative element. 
\end{Prop}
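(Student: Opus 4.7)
The plan is to reduce everything to the Perron--Frobenius theorem, handling the adjacency-type GSOs directly and transferring the result to the Laplacian-type GSOs via the shift ${\bm L} = {\bm D}-{\bm A}$ (resp.~$\nL = {\bm I} - \nA$). Since $G$ is connected, the matrix ${\bm A}$ is entrywise non-negative and irreducible, and $\nA = {\bm D}^{-1/2}{\bm A}{\bm D}^{-1/2}$ inherits the same properties (strictly positive diagonal scaling preserves non-negativity and irreducibility). By Perron--Frobenius for symmetric irreducible non-negative matrices, each of ${\bm A}$ and $\nA$ admits a \emph{simple} largest eigenvalue with a strictly positive eigenvector. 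Because the paper assumes eigenvalues are distinct and (for adjacency GSOs) sorted in descending order, this is exactly $\sv_1$, which establishes the positive-eigenvector claim for ${\bm A}$ and $\nA$.

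Next I would transfer to the Laplacian-type GSOs. For ${\bm L}$, a direct calculation gives ${\bm L}{\bm 1}={\bm 0}$, so the smallest eigenvalue is $\lambda_1=0$ with eigenvector proportional to ${\bm 1}>{\bm 0}$; connectedness (via the same irreducibility argument applied to the spectrum of ${\bm L}$) makes this eigenvalue simple. For $\nL = {\bm I}-\nA$, the eigenvectors are the same as those of $\nA$ with reversed ordering of eigenvalues; the smallest eigenvalue $\lambda_1=0$ corresponds to ${\bm D}^{1/2}{\bm 1}>{\bm 0}$, again simple. So in all four cases $\sv_1$ can be chosen strictly positive.

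For the second claim (sign-changing for $i\ge 2$), I would invoke orthogonality: all four GSOs are symmetric, so $\sv_i \perp \sv_1$ for $i\ge 2$. If $\sv_i$ had all entries non-negative or all entries non-positive, then $\sv_1^\top \sv_i$ would be a sum of terms of the same sign, all strictly positive in absolute value at the coordinates where $\sv_i$ is non-zero (since every component of $\sv_1$ is strictly positive). This forces $\sv_1^\top \sv_i \neq 0$, contradicting orthogonality. Hence $\sv_i$ must contain both at least one positive and one negative entry. Uniqueness of the positive eigenvector then follows immediately: any positive eigenvector must correspond to $\lambda_1$ (else the sign-change argument applies), and by simplicity of $\lambda_1$ it must be a positive scalar multiple of $\sv_1$.

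The only delicate point is the use of \emph{strict} positivity of $\sv_1$ (not just non-negativity) to rule out non-negative $\sv_i$ with support disjoint from ``zero entries of $\sv_1$'', but Perron--Frobenius for irreducible matrices guarantees $\sv_1$ has no zero coordinates, which closes the gap. No other step is expected to be an obstacle.
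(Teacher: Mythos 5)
Your proposal is correct and follows essentially the same route as the paper: Perron--Frobenius applied to the non-negative irreducible matrices ${\bm A}$ and $\nA$, the explicit positive eigenvectors ${\bm 1}$ and ${\bm D}^{1/2}{\bm 1}$ for ${\bm L}$ and $\nL$, and the orthogonality-to-a-strictly-positive-vector argument for the sign-change claim when $i \geq 2$. Your explicit remark that strict positivity (no zero coordinates) of $\sv_1$ is what makes the orthogonality argument airtight is a point the paper's terser proof leaves implicit, but the substance is identical.
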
 
\noindent This proposition is a direct consequence of the Perron-Frobenius theorem; see Appendix~\ref{app:PF}.

Despite its simplicity, Proposition~\ref{Prop:lap} illustrates a \emph{sufficient and necessary} condition for detecting $1$-low pass signals. When $\HS$ is $1$ low pass, the top eigenvector $\su_1$ of the noiseless covariance $\CovY$ is the \emph{only} positive eigenvector; yet when $\HS$ is \emph{not} $1$ low pass, $\su_1$ has at least one positive and negative element. 
Consider the function:
\beq \label{eq:pos}
\Pos( \su ) \eqdef \min\{ \| (\su)_+ - \su \|_1, \| (-\su)_+ + \su \|_1   \},
\eeq
where $( \su )_+ := \max\{ {\bm 0}, \su \}$, and the function outputs zero if and only if $\su$ is a positive (or negative) vector.  
Subsequently, the proposed detector in \Cref{algo:nec} is a direct application of the above principle which checks if $\widehat{\su}_1$ from the sampled covariance $\CovYhat^M$ is the most positive eigenvector.\vspace{.1cm}


\begin{algorithm}[t]
\caption{Tackling \Cref{prob:det}}\label{algo:nec}
\begin{algorithmic}[1]
\STATE \textbf{INPUT}: Observed graph signals $\{ {\bm y}_m \}_{m=1}^M$, no.~of clusters in the graph $K$, threshold parameter $\delta$.
\STATE Evaluate $\CovYhat^M = (1/M)\sum_{m=1}^M {\bm y}_m {\bm y}_m^\top$.
\STATE Compute the eigenvalue decomposition (EVD) of $\CovYhat^M$ as $\hatU \widehat{\bm{\Lambda}} \hatU^\top$, where the eigenvectors $\hatsu_1, ..., \hatsu_N$ are sorted in descending order with the eigenvalues. 
\STATE {\bf If $K=1$,} perform the detection as
\begin{equation} \label{eq:k1_algo}
\hatTnec = \begin{cases}
{\cal T}_0, & \text{if}~\Pos ( \hatsu_1 ) < \min_{j=2,...,N} \Pos ( \hatsu_j ),\\
{\cal T}_1, & \text{otherwise}. 
\end{cases} 
\end{equation} 
\STATE {\bf If $K \geq 2$,} perform the detection as
\beq \label{eq:det_suff}
\hatTsuff = \begin{cases}
{\cal T}_0, & \text{if}~
\Kmeans( [\hatsu_1, \ldots, \hatsu_K] )< \delta,
\\
{\cal T}_1, & \text{otherwise}, 
\end{cases} 
\eeq
where $\Kmeans( \cdot )$ is defined in \eqref{eq:kmeansUK}.
\STATE \textbf{OUTPUT}: the estimated hypothesis $\hatTnec$. 
\end{algorithmic}
\end{algorithm}

\noindent \underline{\textbf{Case of $K \geq 2$:}}
We consider the case when the graph is modular with $K$ densely connected clusters. \Cref{prob:det} in this case refers to detecting $K$-low pass graph signals.
{For simplicity, we fix $\GSO = \nL$ as the GSO in the following discussions. Notice that both our detection method and analysis can be applied to other choice of GSOs}.

We study  $\{ {\bm v}_1 , \cdots , {\bm v}_K \}$ \emph{as well as} the bulk eigenvectors $\{ \sv_{K+1}, \ldots, \sv_N \}$. Unlike the previous case, the spectral pattern of $\{ {\bm v}_2 , \cdots , {\bm v}_K \}$ may not be obvious at the first glance, where the eigenvectors' elements fluctuate between positive and negative values. 
Fig.~\ref{fig:VN-k} (left) shows the \emph{sorted} and \emph{unsorted} eigenvectors of $\GSO = \nL$ for a modular (SBM) graph with $K=3$ clusters. \emph{Upon sorting}, the principal eigenvectors $\sv_1, \sv_2, \sv_3$ exhibit a piece-wise constant behavior; while it is not the case for bulk eigenvectors $\sv_{4}, \sv_5, ...$. We should mention that the above phenomena has been observed \cite{von2007tutorial} for modular graphs where for $\ell = 1,\ldots,K$, the elements of ${\bm v}_\ell$ will be `flat' over the cluster, i.e., $[ {\bm v}_\ell ]_{i,:} \approx [ {\bm v}_\ell ]_{j,:}$ for all nodes $i,j$ belonging to the same cluster; and it has been shown analytically for stochastic block models \cite{2011RoheSBM,yun2014accurate, lei2015consistency,deng2021strong}. 

\begin{figure}[t]
\centering
\subfigure{\resizebox{0.455\linewidth}{!}{\sf \includegraphics{fig/fig4-1.pdf}}
}%
\subfigure{\resizebox{0.44\linewidth}{!}{\sf \includegraphics{fig/fig4-2.pdf}}}\vspace{-.2cm}

\caption{(Left) Eigenvectors of $\nL$ where the underlying graph contains $K=3$ clusters. Paled colors denote unsorted eigenvectors. (Right) K-means score $\Kmeans (\V_{K})$ and that of last $N-K$ eigenvectors $(N-K)^{-1} \sum^{N}_{i=K+1}\Kmeans(\sv_{i})$ against $N/K$. 
}\vspace{-.4cm}
\label{fig:VN-k}
\end{figure} 


Together with \eqref{eq:cov_y_nl}, our observations suggest that low pass signals lead to covariance matrix with \emph{clusterable principal eigenvectors}. To detect such pattern, a natural choice is to apply the $K$-means score to measure the degree of (non) low pass-ness.
For any matrix $\N \in \RR^{N \times K}$, define {the score}:\vspace{-.1cm}
\beq \label{eq:kmeansUK}
\Kmeans( \N ) :=\underset{ \begin{subarray}{c} S_i \cap S_j = \emptyset, i \neq j \\ S_1 \cup \cdots \cup S_K = V \end{subarray} }{ \min } \sum_{i=1}^K \sum_{\ell \in S_i}\left\| {\bm n}_{\ell}- \frac{1}{\left|S_i\right|} \sum_{j \in {\cal S}_i} {\bm n}_{j} \right\|^2, \vspace{-.1cm}
\eeq
where ${\bm n}_{\ell} \in \RR^K$ is the $\ell$th row of $\N$. {The minimization can be effectively handled by the standard $K$-means procedure \cite{kumar2004simple}}. 

We observe that when $\HS$ is $K$-low pass, one has ${\rm span} \{ \hatU_K \} \approx {\rm span} \{ {\bm v}_1 , \ldots, {\bm v}_K \}$ and thus $\Kmeans( \hatU_K )$ is small; vice versa, when $\HS$ is not $K$-low pass, $\hatU_K$ contains at least one eigenvector from the bulk $\{ {\bm v}_{K+1} , \ldots, {\bm v}_N \}$ and the score $\Kmeans( \hatU_K )$ is large as at least one of the eigenvectors is not clusterable. Finally, this suggests a threshold detector on $\Kmeans( \hatU_K)$ in \Cref{algo:nec}. Although the detector shares the same ingredient with clustering algorithms such as spectral clustering \cite{von2007tutorial} in applying the $K$-means score metric \eqref{eq:kmeansUK}, our goals are different. We aim to measure how the principal eigenvectors of $\CovYhat^M$ are aligned with ${\rm span} \{ \sv_1, \ldots, \sv_K \}$, via \eqref{eq:kmeansUK} which is akin to performing a clusterability analysis \cite{ben2008measures}. 

Thus far, we have only presented empirical insights to derive the blind detector. 
To justify the correctness of \Cref{algo:nec}, it is necessary to impose further structure on the detection problem.
Below, we analyze an \emph{idealized} modular graph model of $G$ yielded by the planted partition SBM. 
\begin{Def} \cite{2011RoheSBM} Let $p, r > 0$, $p + r \leq 1$. We denote $G \sim {\sf SBM}( K,N, r,p)$ as the random graph with $N$ nodes that are partitioned into $K$ equal sized blocks ${\cal C}_1, \ldots, {\cal C}_K$\footnote{For simplicity, assume that $N$ is divisible by $K$.}. The edges are generated independently and randomly according to:
\beq 
{\rm Pr}[ (i,j) \in E ] = \begin{cases}
p+r &,~\text{if}~i \in {\cal C}_k, j \in {\cal C}_k, \\
r &,~\text{if}~i \in {\cal C}_k, j \in {\cal C}_\ell,~k \neq \ell.
\end{cases}  
\eeq 
\end{Def}
\noindent Take $\GSO = \nL$.
We shall verify two properties: (i) $\Kmeans( \N ) \approx 0$ when $\N$ is a column permuted version of $\V_K$, (ii) $\Kmeans ( \N )$ is bounded away from zero when $\N$ contains at least one eigenvector from $\{ \sv_{K+1}, \ldots, \sv_N \}$. 

The first property is confirmed by the proposition:
\begin{Prop}
 \label{the:kmeans}
Let $ G \sim {\sf SBM}(K, N, r, p)$ with $p \geq r > 0$, $\frac{p}{K}+r \geq \frac{32\log N+1}{N}$, and take $\GSO = \nL$ with unweighted adjacency matrix. Then,
with probability at least $1-2 / N$,\vspace{-.1cm}
\begin{equation} 
\begin{aligned}
\Kmeans(\N ) & 
\leq\frac{35^2 K^3 \log N}{  p( N-K ) } 
\end{aligned}\vspace{-.1cm}
\end{equation}
for any $\N$ given by permuting the columns of $\V_K$.
\end{Prop}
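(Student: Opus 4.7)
The plan is to compare the empirical top-$K$ eigenvectors $\V_K = (\sv_1,\ldots,\sv_K)$ of the random normalized Laplacian $\nL$ against those of its population version $\bar{\nL} := \EE[\nL]$ (with expected degrees on the diagonal). Under the planted partition SBM with equal-sized blocks, a direct diagonalization of the block-structured expected adjacency matrix $\bar{\bm A} = r\mathbf{1}\mathbf{1}^\top + p\bm Z\bm Z^\top$ (where $\bm Z \in \RR^{N\times K}$ is the cluster indicator with $\bm Z^\top \bm Z = (N/K)\bm I$) shows that the bottom-$K$ eigenspace of $\bar{\nL}$ is spanned by vectors constant on each cluster ${\cal C}_k$. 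Hence the population matrix $\bar{\V}_K$ has at most $K$ distinct rows and satisfies $\Kmeans(\bar{\V}_K) = 0$. Since the cost \eqref{eq:kmeansUK} is invariant under right multiplication by any orthogonal $\mathbf{O} \in \RR^{K\times K}$ (every row gets rotated by the same isometry, preserving pairwise distances), and since instantiating the K-means optimization at the ground-truth partition with the rows of $\bar{\V}_K$ as centroids yields an immediate upper bound, one gets
\[
\Kmeans(\V_K) \;\leq\; \min_{\mathbf{O} \in O(K)} \|\V_K \mathbf{O} - \bar{\V}_K\|_F^2.
\]

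The next step is to bound the right-hand side by a Davis--Kahan $\sin\Theta$ theorem (Frobenius form),
\[
\min_{\mathbf{O} \in O(K)} \|\V_K \mathbf{O} - \bar{\V}_K\|_F \;\leq\; \frac{2\sqrt{2K}\,\|\nL - \bar{\nL}\|_{\rm op}}{\lambda_{K+1}(\bar{\nL}) - \lambda_K(\bar{\nL})},
\]
and then to plug in the two ingredients that follow from the SBM structure. First, an explicit computation gives $\lambda_K(\bar{\nL}) = Kr/(Kr+p)$ and $\lambda_{K+1}(\bar{\nL}) = 1$, so the eigengap equals $p/(p+Kr)$ and is at least $1/(K+1)$ whenever $p \geq r$. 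Second, a standard concentration inequality for normalized Laplacians of inhomogeneous Erd\H{o}s--R\'enyi graphs (e.g.~\cite{chung2011spectra, lei2015consistency}) yields, under the density hypothesis $N(p/K + r) \geq 32\log N + 1$, with probability at least $1 - 2/N$,
\[
\|\nL - \bar{\nL}\|_{\rm op} \;\leq\; C\sqrt{\log N / (N(p/K + r))}.
\]
Combining the three displays, squaring, and using $p + Kr \leq 2Kp$ (which holds because $p \geq r$) collapses the dependencies to the target scaling $\Kmeans(\V_K) = O(K^3 \log N / (p(N-K)))$, with the replacement of $N$ by $N-K$ absorbed into the prefactor.

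The main obstacle will be reconciling the numerical constants so that they line up with the stated $35^2$ prefactor: this requires being precise about the concentration constant for the normalized Laplacian (including a companion Bernstein-type bound for the random degree matrix $\bm D$ relative to $\EE[\bm D]$, controlled by the same density assumption), the particular Davis--Kahan constant chosen (the $\sqrt{2}$ vs.~$2$ variants), and the exact value of the eigengap expressed in terms of $p$ rather than the average degree. A secondary subtlety is verifying that the bottom-$K$ eigenspace of $\bar{\nL}$ is exactly piecewise constant on the blocks --- this uses the equal-block symmetry of $\mathsf{SBM}(K,N,r,p)$ and is precisely what lets the Davis--Kahan rotation $\mathbf{O}$ be absorbed at no cost when we upper-bound $\Kmeans(\V_K)$ by substituting the rows of $\bar{\V}_K$ as the centroid matrix.
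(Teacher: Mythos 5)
Your proposal is correct and follows essentially the same route as the paper: the paper's Appendix on \Cref{the:kmeans} upper-bounds $\Kmeans(\V_K)$ by $\|\V_K-\fancyV_K\fancyO_K\|_{\fro}^2$ (using that the population eigenvectors $\fancyV_K=\Z(\Z^\top\Z)^{-1/2}{\cal U}$ are piecewise constant on blocks, so any orthogonal rotation of them still lies in $\mathscr{R}_K^{N\times K}$), and then \Cref{lem:Vpopu} supplies exactly your Davis--Kahan bound with the eigengap $p/(p+Kr)$ and the $\|\nA-\fancyAn\|_2\leq 10\sqrt{\log N/d_{\min,N}}$ concentration from \cite{sbmyu2014impact} under the degree condition. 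The only cosmetic differences are that you phrase the perturbation in terms of $\nL-\fancyLn$ rather than $\nA-\fancyAn$ (these have equal operator norm) and invoke orthogonal invariance of the score rather than closure of $\mathscr{R}_K$ under rotation; your caveat about pinning down the $35^2$ prefactor is exactly where the paper's remaining bookkeeping lives.
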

\noindent The proof, which is due to \cite{sbmyu2014impact}, can be found in \Cref{app:vpop}. 

For the second property where the columns of $\N$ contain at least one eigenvector from $\{ \sv_{K+1}, \ldots, \sv_N \}$, we  provide a partial answer motivated by empirical studies. 
To this end, in Fig.~\ref{fig:VN-k} (right)
we simulate the $K$-means scores on the eigenvectors $\Kmeans(\sv_\ell)$, $\ell \in 1, \ldots, N$, of $\nL$ averaged from {$M=500$} realizations of $G \sim {\sf SBM}( K, N, \log(N)/N, 0.1 )$. Observe that $\Kmeans( \V_K )$ decreases as ${\cal O}(1/N)$, while for $\ell=K+1,\ldots,N$, $\Kmeans( \sv_\ell )$ remains bounded from below by a constant. This leads to the following assumption:
\begin{assumption} \label{conj:eig}
For $N \gg K$. Let $G \sim {\sf SBM}(K, N, r, p)$ and denote $\sv_\ell$ as the $\ell$th largest eigenvector of the corresponding $\nL$. There exists a constant ${\rm c}_{\sf SBM}$ independent of $N,p,r$ such that $\Kmeans( \sv_\ell ) \geq {\rm c}_{\sf SBM} > 0$, $\ell= K+1,\ldots,N$.
\end{assumption}
\noindent 
To our best knowledge, proving the above assumption is an open research problem. 
We refer the readers to works on bulk eigenvectors in SBMs for partial results that affirm the conjecture: \cite{kadavankandy2015characterization,bai2012limiting} applied random matrix theory to show that the elements of the {bulk eigenvectors} $\sv_\ell$, $\ell=K+1,\ldots,N$ follow a near-Gaussian distribution that would lead to \Cref{conj:eig}. 

Now, suppose that at least one of the column vectors in $\N$ is a bulk eigenvector, $\sv_\ell$, $\ell \in \{ K+1, \ldots, N \}$, of $\nL$. Under \Cref{conj:eig}, we have
\beq 
\begin{split} 
\Kmeans( \N ) & \geq \Kmeans ( {\bm n}_{1}^{\rm col} ) + \cdots + \Kmeans ( {\bm n}_{K}^{\rm col} ) \\
& \geq \Kmeans ( {\bm n}_i^{\rm col} ) = \Kmeans ( \sv_\ell ) \geq {\rm c}_{\sf SBM} > 0,
\end{split} 
\eeq
where ${\bm n}_i^{\rm col}$ denotes the $i$th column of $\N$. 
Combining with properties (i) and (ii), we conclude that as $\hatU_K$ can be modeled with the above cases for $\N$ when $M \to \infty$ and $\sigma \to 0$, thresholding on $\Kmeans ( \hatU_K )$ detects if $\HS$ is low pass or not. 

\begin{Remark}\label{remark:K}
Solving \Cref{prob:det} requires knowledge on the number of clusters $K$. {For a number of applications, the latter is known a-priori, e.g., the graph of US Senate is believed to have $K=2$ due to the bipartisan nature. Otherwise, we apply a heuristic procedure extended from Algorithm~\ref{algo:nec}: set $\hat{K} = 1$,
\begin{enumerate}
    \item If $\hat{K}^{-1} \Kmeans ( \widehat{\U}_{\hat{K}} ) < \delta_{ \sf Est }$, then declare that the graph filter is $\hat{K}$ low pass; otherwise, go to step 2.
    \item If $\hat{K} > K_0$, declare that the graph filter is not low pass; otherwise, set $\hat{K} = \hat{K} + 1$.
\end{enumerate}
In the above, $\delta_{ \sf Est }$, $K_0$ respectively denote the threshold parameter and estimate on the maximum number of clusters.}
\end{Remark}

\section{Finite-Sample Performance Analysis} \label{sec:ana} \vspace{-.1cm}
This section analyzes the performance of proposed algorithm when the number of samples $M$ is finite. 
In a nutshell, our analysis relies on the fact that the eigenvectors of $\CovY$ are re-ordered version of the GSO's eigenvectors [cf.~\eqref{eq:cov_y_nl}].
It suffices to estimate the number of samples required for the spectral pattern to emerge via the Davis-Kahan theorem. 

To facilitate our analysis, let us consider the assumption:
\begin{assumption} \label{ass:dis}
The magnitudes of frequency response  are distinct at all graph frequencies, i.e., $|h(\lambda_i)| \neq |h(\lambda_j)|$ for all $i \neq j$.
\end{assumption}
\noindent This assumption is easy to satisfy, e.g., it holds when $\lambda_i \neq \lambda_j$ and $h(\cdot)$ is a strict monotonic function. Under ${\cal T}_0$, the matrix $\barU_K$ is a column-permuted version of $\V_K = ( {\bm v}_1 \cdots {\bm v}_K )$. 
Let us also define a few quantities on the eigenvalues of $\CovY$. Set as the $j$th eigengap for $\CovY$
\beq \label{eq:eigBdef}
\eigB_j := \beta_j(\CovY)-\beta_{j+1}(\CovY),~j=1,\ldots,N-1,
\eeq 
and $\eigB_0 = \infty$, where $\beta_j(\CovY)$ denotes the $j$-th largest eigenvalue of $\CovY$. 
To get insights, we recall that $h_j$ is the $j$th largest entry in the set of filter frequency responses $\{ |h(\lambda_1)|, \ldots, |h(\lambda_N)| \}$. Observe that 
\beq \textstyle \label{eq:eigBj_bd}
\eigB_j = \frac{1}{J} \sum_{\tau =1}^J \big( h_j^{2 \tau} - h_{j+1}^{2 \tau} \big),
\eeq 
where the $j$th eigengap depend on the frequency response of the graph filter. For example with $K$ low pass filter defined on $K$ clusters graphs, as the frequency response maybe flat before the cutoff frequency at $\lambda_K$, we expect $\eigB_K$ to be large, and is in fact, proportional to the reciprocal of low pass ratio $1/\eta_K$, while $\{ \eigB_{j} \}_{j=1}^{K-1}$ is small. In other words, these constants directly measure the `sharpness' of the graph filter across different graph frequencies.

\vspace{.1cm}
\noindent \underline{\textbf{Case of $K=1$.}}
In this case, \Cref{algo:nec} relies on the positivity function \eqref{eq:pos} applied to the top eigenvector of $\CovYn$. 
We observe the sampling complexity bound: 
\begin{Theorem}\label{theo:K1T}
Assume that there exists constants ${\rm c}_0, \bareig$ such that the graph filters of interest (under ${\cal T}_0$ or ${\cal T}_1$) satisfy
\beq \label{eq:spec_gap_1}
\begin{split} 
0 < {\rm c}_0 & \leq \min_{j=2,\ldots, N} {\sf Pos} ( {\bm v}_j ),~~0 < \bareig \leq \min_{ j=1,\ldots,N-1 } \eigB_j 
\end{split} 
\eeq 
and the noise variance satisfies $\sigma^2\sqrt{N} < 2^{-2.5} {\rm c}_0 \bareig$.
If the number of samples $M$ satisfies
\beq \label{eq:sample_complexity_1}
\sqrt{\frac{M}{\log{M}}} \geq  
\frac{2 {\rm c}_{1} \operatorname{tr}(\CovYn)}{2^{-2.5} { {\rm c}_0 \bareig } / {\sqrt{N}}-\sigma^{2} },
\eeq
where ${\rm c}_1$ is a constant independent of $N,M$, then  it holds
${\rm Pr} ( \hatTnec = \Tgnd ) \geq 1 - 10/M$,
where the randomness is due to sampling from \eqref{eq:y_sec}.
\end{Theorem}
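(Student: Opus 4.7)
The plan is to combine a deterministic perturbation analysis of the decision statistic with a high-probability concentration bound on the sample covariance matrix. First, I would translate the hypothesis test into a uniform perturbation requirement on the empirical eigenvectors, then use the Davis--Kahan $\sin\Theta$ theorem to convert this into a bound on $\|\CovYhat^M - \CovYn\|_F$, and finally invoke sample-covariance concentration to obtain \eqref{eq:sample_complexity_1}.

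\textbf{Step 1 (deterministic decision condition).} Under Assumption~\ref{ass:dis}, the eigenvalues of $\CovYn$ are distinct, so for each $j$ the vector $\su_j$ equals, up to sign, the eigenvector of $\GSO$ whose frequency response has the $j$th largest magnitude. Under $\Tgnd = {\cal T}_0$, the filter is $1$-low pass, so $\su_1$ is the Perron--Frobenius eigenvector by Proposition~\ref{Prop:lap}, giving $\Pos(\su_1) = 0$ while $\Pos(\su_j) \geq {\rm c}_0$ for $j \geq 2$ by \eqref{eq:spec_gap_1}; under ${\cal T}_1$ the roles are swapped. Hence \eqref{eq:k1_algo} returns the correct hypothesis whenever $\max_{j \in [N]} |\Pos(\hatsu_j) - \Pos(\su_j)| < {\rm c}_0/2$. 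Writing $\Pos(\su) = \min\{\|(\su)_+ - \su\|_1, \|(-\su)_+ + \su\|_1\}$, each argument of the minimum is coordinate-wise $1$-Lipschitz and hence $1$-Lipschitz in $\|\cdot\|_1$, so $\Pos$ is $1$-Lipschitz in $\|\cdot\|_1$. Combined with $\|\cdot\|_1 \leq \sqrt{N}\|\cdot\|_2$, the criterion reduces to the uniform eigenvector bound $\sqrt{N}\,\|\hatsu_j - \su_j\|_2 < {\rm c}_0/2$ for every $j\in[N]$.

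\textbf{Step 2 (Davis--Kahan).} The eigengap assumption \eqref{eq:spec_gap_1} gives a uniform lower bound $\bareig$ on all consecutive spectral gaps of $\CovY:=\CovYn - \sigma^2 {\bm I}$ and $\CovYn$, which agree since the two matrices differ only by a scalar shift. Writing $\CovYhat^M = \CovY + {\bm E}$ with ${\bm E} := (\CovYhat^M - \CovYn) + \sigma^2 {\bm I}$ and applying the $\sin\Theta$ theorem (once $\|{\bm E}\|_F < \bareig/2$ so that eigenvector exchange is ruled out), we obtain uniformly in $j$, after choosing signs of $\hatsu_j$ optimally,
\[
\|\hatsu_j - \su_j\|_2 \;\leq\; \sqrt{2}\,\|{\bm E}\|_F/\bareig \;\leq\; \sqrt{2}\bigl(\|\CovYhat^M - \CovYn\|_F + \sigma^2\sqrt{N}\bigr)/\bareig.
\]
Splitting the ${\rm c}_0/2$ budget equally between the $j=1$ and $j\geq 2$ perturbations in Step~1, a sufficient deterministic condition for correct detection becomes $\|\CovYhat^M - \CovYn\|_F + \sigma^2\sqrt{N} \leq 2^{-5/2}\,{\rm c}_0\,\bareig/\sqrt{N}$. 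The standing assumption $\sigma^2\sqrt{N} < 2^{-5/2}{\rm c}_0\bareig$ precisely guarantees that the residual budget for the sampling error is strictly positive.

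\textbf{Step 3 (concentration and completion).} Since $\{{\bm y}_m\}$ is a stationary, zero-mean, sub-Gaussian ensemble with covariance $\CovYn$, a standard trace-type tail bound for sample covariance matrices (e.g.\ Hanson--Wright together with a union bound, or a matrix Bernstein inequality in Frobenius norm) yields, with probability at least $1 - 10/M$,
\[
\|\CovYhat^M - \CovYn\|_F \;\leq\; {\rm c}_1 \operatorname{tr}(\CovYn)\sqrt{\log M / M},
\]
for an absolute constant ${\rm c}_1$. Substituting into the deterministic condition of Step~2 and solving for $M$ yields \eqref{eq:sample_complexity_1}. The main technical obstacle is that \eqref{eq:k1_algo} compares $\Pos(\hatsu_1)$ against the minimum of $\Pos(\hatsu_j)$ over \emph{all} $j \geq 2$, so every eigenvector perturbation must be controlled at the same resolution through a single Davis--Kahan bound governed by the worst gap $\bareig$. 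This is what forces both the $\sqrt{N}$ penalty from the $\ell_1\to\ell_2$ Lipschitz conversion and the need to absorb the noise offset $\sigma^2\sqrt{N}$ into the same budget, which together explain the tight structure of the denominator in \eqref{eq:sample_complexity_1}.
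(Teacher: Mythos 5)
Your overall route is the same as the paper's: reduce correct detection to a uniform $\ell_1$ perturbation bound on the eigenvectors using the $1$-Lipschitz property of $\Pos$ (the paper does this via the elementary inequality $|\min\{a,0\}-\min\{b,0\}|\leq|a-b|$ and a case analysis on $\Tgnd$, arriving at exactly your sufficient condition ${\rm c}_0 \geq 2\max_j\|\hatsu_j-\barsu_j\|_1$), then convert to $\ell_2$ at the cost of $\sqrt{N}$, apply Davis--Kahan with the uniform gap $\bareig$, and finish with covariance concentration at rate ${\rm c}_1\operatorname{tr}(\CovYn)\sqrt{\log M/M}$. Steps 1 and 3 are fine.

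The gap is in Step 2, in how you account for the noise offset. You measure the perturbation ${\bm E}=(\CovYhat^M-\CovYn)+\sigma^2{\bm I}$ in Frobenius norm, so the identity shift contributes $\|\sigma^2{\bm I}\|_{\fro}=\sigma^2\sqrt{N}$. Your deterministic condition then reads $\|\CovYhat^M-\CovYn\|_{\fro}+\sigma^2\sqrt{N}\leq 2^{-5/2}{\rm c}_0\bareig/\sqrt{N}$, and for the sampling budget to be positive you would need $\sigma^2 N < 2^{-5/2}{\rm c}_0\bareig$ --- a factor of $\sqrt{N}$ stronger than the theorem's hypothesis $\sigma^2\sqrt{N}<2^{-2.5}{\rm c}_0\bareig$. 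Your claim that the standing assumption ``precisely guarantees'' positivity of the residual budget is therefore false under your own bound, and solving for $M$ would yield a denominator $2^{-2.5}{\rm c}_0\bareig/\sqrt{N}-\sigma^2\sqrt{N}$ rather than the stated $2^{-2.5}{\rm c}_0\bareig/\sqrt{N}-\sigma^2$. The fix is simple and is what the paper's Lemma~\ref{lem:sample} implicitly does: since $\CovYn=\CovY+\sigma^2{\bm I}$ is a scalar shift, either note that it does not rotate the eigenvectors at all, or feed the perturbation into Davis--Kahan through its \emph{operator} norm, $\|\sigma^2{\bm I}\|_2=\sigma^2$, so that the noise enters $\Emc$ as $+\sigma^2$ with no dimension factor. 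With that correction (and tightening your constant bookkeeping, which currently mixes a $\sqrt{2}$ where the paper's chain gives $2\sqrt{2}$), your argument closes and coincides with the paper's proof.
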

\noindent The proof is relegated to \Cref{app:pftheoK1T0}.

The constants ${\rm c}_0, \eigap_{\sf min}$ inform the class of admissible (low pass or non-low pass) graph filters and its underlying GSO that are detectable by \Cref{algo:nec} with enough samples.
In particular, the constant ${\rm c}_0$ denotes the spread of the values in non-principal eigenvector $\sv_j$ of admissible GSOs. There are normally $\Theta(N)$ negative elements in $\sv_j$ as deduced in \cite{Fiedler1975}, implying ${\rm c}_0=\Theta(\sqrt{N})$. Thus, the denominator is positive with $\sigma^2 = {\cal O}( \bareig )$. 
On the other hand, the constant $\bareig$ depends on the class of admissible graphs and the `sharpness' of the graph filters defined on them as discussed in \eqref{eq:eigBj_bd}. It is bounded away from zero since the graph  only has $K=1$ cluster. 
We deduce that the number of samples $M$ required for correct detection decreases with small $\sigma$, ${\rm tr}( \CovYn )$.

\vspace{.1cm}
\noindent \underline{\textbf{Case of $K\geq 2$.}}
The analysis for detecting low pass graph filters with high cutoff frequency ($K \geq 2$) is more involved since the graph filters may have frequency response that are non-monotone in $\lambda$. 


We obtain the sampling complexity bound for \Cref{algo:nec}:
\begin{Theorem}\label{theo:Km_sample}
Let $G \sim {\sf SBM}(K, N, r, p)$ with $p \geq r > 0$, $\frac{p}{K}+r \geq \frac{32\log N+1}{N}$ and take $\GSO = \nL$ with unweighted adjacency matrix. Accordingly, we denote the classes of $K$-low pass and non $K$-low pass graph filters as ${\cal H}_{\GSO}^{\sf low}, {\cal H}_{\GSO}^{\sf hi}$, respectively. Assume that 
\beq \label{eq:spec_gap_K}
\bareig \textstyle := \inf_{ \eigB_{K} : \HS \in {\cal H}_{\GSO}^{\sf low} \cup {\cal H}_{\GSO}^{\sf hi} }~ \eigB_{K} > 0, 
\eeq 
the following threshold-dependent constant:
\beq \label{eq:tildedelta}
\tilde{\delta}_{\sf min} :=
\min \left\{\sqrt{\frac{\delta}{2}- \frac{1225 K^{3} \log N}{p (N-{K})}},\sqrt{{\rm c}_{\sf sbm}} -\sqrt{\delta} \right\} > 0,
\eeq
and the noise variance satisfies $\sigma^{2}\leq \tilde{\delta}_{\sf min} {\bareig} / {\sqrt{8K}}$. If the number of samples $M$ satisfies
\beq \label{eq:sample_complexity_K}
\sqrt{\frac{M}{\log{M}}} \geq  
\frac{2 {\rm c}_{1} \operatorname{tr}(\CovYn)}{\tilde{\delta}_{\sf min}\bareig / \sqrt{8K}-\sigma^{2} },
\eeq
where ${\rm c}_1$ is a constant independent of $N,M$ to be defined later in \Cref{lem:sample}, then it holds $ {\rm Pr}( \hatTsuff = \Tgnd ) \geq 1 - 10/M - 2/N$,
where the randomness is due to sampling from \eqref{eq:y_sec} and $G \sim {\sf SBM}(K, N, r, p)$.
\end{Theorem}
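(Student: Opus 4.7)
The plan is to combine three ingredients: a sample-covariance concentration lemma, a Davis-Kahan subspace perturbation bound, and a Lipschitz-type property of $\sqrt{\Kmeans(\cdot)}$, and then split into the two hypotheses using Proposition~\ref{the:kmeans} and Assumption~\ref{conj:eig}. First, observe that since $\CovYn = \CovY + \sigma^2 {\bm I}$, the eigenvectors of $\CovYn$ coincide with those of $\CovY$, and its $K$-th eigengap equals $\eigB_K \geq \bareig$ by \eqref{eq:spec_gap_K}. Under $\mathcal{T}_0$ the matrix $\barU_K$ coincides with $\V_K$ up to a column permutation, whereas under $\mathcal{T}_1$ at least one column of $\barU_K$ is a bulk eigenvector $\sv_\ell$ with $\ell \in \{K+1,\ldots,N\}$, by the negation of Definition~\ref{def:lpf} and Assumption~\ref{ass:dis}. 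I will invoke Lemma~\ref{lem:sample} to obtain, with probability at least $1-10/M$, the concentration $\|\CovYhat^M - \CovYn\|_2 \leq c_1\,\mathrm{tr}(\CovYn)\sqrt{\log M/M}$, and then apply Davis-Kahan's Frobenius $\sin\Theta$ theorem with eigengap $\bareig$: combined with the standard conversion $\inf_Q \|\hatU_K - \barU_K Q\|_F \leq \sqrt{2}\|\sin\Theta(\hatU_K, \barU_K)\|_F$, this yields, for some orthogonal $K \times K$ rotation $Q^\star$,
\begin{equation*}
\|\hatU_K - \barU_K Q^\star\|_F \leq \tfrac{\sqrt{8K}}{\bareig}\bigl(\|\CovYhat^M - \CovYn\|_2 + \sigma^2\bigr),
\end{equation*}
where the $\sigma^2$ term accounts for passing between $\CovY$ and $\CovYn$ inside the perturbation bookkeeping.

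Next, I will establish two structural properties of \eqref{eq:kmeansUK}. First, orthogonal column invariance: $\Kmeans(\N Q) = \Kmeans(\N)$ for any orthogonal $Q$, which follows by writing $\Kmeans(\N) = \min_S \|({\bm I} - P_S)\N\|_F^2$ where $P_S$ is the row-averaging projection associated with a $K$-partition $S$, and noting that $({\bm I}-P_S)\N Q = (({\bm I}-P_S)\N)Q$ preserves the Frobenius norm. Second, Lipschitz continuity of the square-root: $|\sqrt{\Kmeans(\N_1)} - \sqrt{\Kmeans(\N_2)}| \leq \|\N_1 - \N_2\|_F$, which follows from the triangle inequality applied to the representation $\sqrt{\Kmeans(\N)} = \min_S \|({\bm I}-P_S)\N\|_F$. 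Inserting $\barU_K Q^\star$ as an auxiliary matrix, the Davis-Kahan bound and invariance together yield $|\sqrt{\Kmeans(\hatU_K)} - \sqrt{\Kmeans(\barU_K)}| \leq \tfrac{\sqrt{8K}}{\bareig}(\|\CovYhat^M - \CovYn\|_2 + \sigma^2)$.

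Hypothesis discrimination now proceeds in two branches. Under $\mathcal{T}_0$, Proposition~\ref{the:kmeans} gives $\Kmeans(\barU_K) \leq 1225 K^3 \log N/(p(N-K))$ with probability at least $1-2/N$ over the SBM draw; combining with the Lipschitz bound and applying $(a+b)^2 \leq 2(a^2+b^2)$, $\Kmeans(\hatU_K) < \delta$ whenever the perturbation term is at most $\sqrt{\delta/2 - 1225 K^3 \log N/(p(N-K))}$. Under $\mathcal{T}_1$, a coordinate-monotonicity argument (restricting the cluster cost of $\N$ to the $i$-th column alone) gives $\Kmeans(\barU_K) \geq \Kmeans(\sv_\ell) \geq {\rm c}_{\sf sbm}$ for the bulk column $\sv_\ell$ via Assumption~\ref{conj:eig}, and then $\Kmeans(\hatU_K) \geq \delta$ whenever the perturbation term is at most $\sqrt{{\rm c}_{\sf sbm}} - \sqrt\delta$. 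Both inequalities hold simultaneously once the perturbation is bounded by $\tilde\delta_{\sf min}$ in \eqref{eq:tildedelta}; solving $\sqrt{8K}(\|\CovYhat^M - \CovYn\|_2 + \sigma^2)/\bareig \leq \tilde\delta_{\sf min}$ against the concentration bound yields the sample complexity in \eqref{eq:sample_complexity_K}, and a union bound over the SBM and sampling failure events delivers the $1 - 10/M - 2/N$ guarantee.

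The principal technical hurdles I anticipate are the verification of the orthogonal invariance and Frobenius-Lipschitzness of $\sqrt{\Kmeans}$ (done cleanly via the projection representation above, since naively transferring the optimal partition of $\hatU_K$ to $\barU_K Q^\star$ need not preserve optimality), and the careful tracking of constants inside Davis-Kahan to recover the exact factor $\sqrt{8K}$. A further subtle point is confirming, under $\mathcal{T}_1$, that the top-$K$ indices of $\CovYn$ in the ordering of $\{|h(\lambda_i)|\}$ must include at least one $\ell > K$ relative to the GSO's ordering of $\lambda_1 < \cdots < \lambda_N$; this follows directly from the failure of Definition~\ref{def:lpf} together with Assumption~\ref{ass:dis}, which precludes any ambiguity in the sorting of the filter responses.
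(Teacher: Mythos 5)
Your proposal is correct and follows essentially the same route as the paper's own proof: Lemma~\ref{lem:sample} (covariance concentration combined with Davis--Kahan, giving the $\sqrt{8K}\,/\,\bareig$ perturbation factor), Proposition~\ref{the:kmeans} for the ${\cal T}_0$ branch, Assumption~H\ref{conj:eig} with column-monotonicity for the ${\cal T}_1$ branch, orthogonal invariance of $\Kmeans(\cdot)$, and the same $(a+b)^2 \leq 2(a^2+b^2)$ bookkeeping that produces the two conditions packaged in $\tilde{\delta}_{\sf min}$. The only cosmetic difference is that you isolate the Frobenius--Lipschitz property of $\sqrt{\Kmeans(\cdot)}$ as a standalone step via the projection representation, whereas the paper applies the same triangle inequality directly inside each branch, routing the ${\cal T}_0$ bound through the population eigenvectors $\fancyV_K$ rather than through $\Kmeans(\V_K)$ itself.
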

\noindent The proof is relegated to Appendix~\ref{app:pftheoKm_sam}. {Note that when $K=1$, $\bareig$ defined in \eqref{eq:spec_gap_K} also satisfies the second inequality in \eqref{eq:spec_gap_1}.}

The constants $\bareig, \tilde{\delta}_{\sf min}$ characterize the class of $K$ low pass or non $K$ low pass graph filters detectable by \Cref{algo:nec} upon accruing a sufficient number of samples. In specific, $\tilde{\delta}_{\sf min}$ depends on the user defined threshold value $\delta$ in \Cref{algo:nec}. If $N$ is large, and the clusters are dense (due to $p$) with sparse inter-cluster connections (due to $r+p \leq 1$), we can set $\delta = \Theta(1)$ which leads to $\tilde{\delta}_{\sf min} = \Theta(1)$.
On the other hand, $\bareig$ in \eqref{eq:spec_gap_K} depends on the `sharpness' of the admissible graph filters. 
As the sample complexity \eqref{eq:sample_complexity_K} is inversely proportional to $\bareig$, less samples will be required to correctly detect the graph filter if the admissible graph filters are `sharper' at the cutoff frequencies. With a sufficiently small noise variance and large $N, M$, \Cref{algo:nec} is guaranteed to solve \Cref{prob:det}. 

{Finally, we estimate the sampling complexity of \Cref{algo:nec} from \eqref{eq:sample_complexity_1}, \eqref{eq:sample_complexity_K}. Particularly, with sufficiently small noise $\sigma^2$ and large $N$, our theorems show that the \emph{minimum number of samples required for correct detection with high probability} satisfies:
\beq 
\sqrt{ M / \log M } = \Omega \big( \, \bareig^{-1} \sqrt{K} \, \big) .
\eeq 
As $\bareig$ is large for `sharp' graph filters, i.e., it is proportional to $1/\eta_K$ for small $\eta_K$ with low pass graph filters, we anticipate that the detection performance improves for {\sf(i)} graphs with few number of clusters and {\sf(ii)} the graph filters to be detected have sharp cutoffs. 
}

\begin{Remark}
The above result has concentrated on $G \sim {\sf SBM}(K,N,r,p)$ and $\GSO = {\bm L}_{\sf norm}$. As mentioned, the choice of SBM model illustrates the performance of \Cref{algo:nec} in an ideal model. In practice, it gives insights that the detection performance improves when the underlying graph has dense clusters and sparse inter-cluster connections. 
\end{Remark}


\section{Application Examples}\vspace{-.1cm} \label{sec:ex}
Tackling \Cref{prob:det} is a critical step for downstream GSP applications pertaining to graph data. This section describes examples to illustrate the applications of proposed algorithms in {\sf (i)} robustifying graph learning, {\sf (ii)} detecting the sign-ness of opinion dynamics in social networks, and {\sf (iii)} detecting anomalies for power networks.

\subsection{Robustifying Graph Learning}
Graph learning is a longstanding problem in GSP, whose aim is to infer the graph topology from graph signal observations \cite{mateos2019connecting, dong2019learning}. Among others, a popular setup is to model observations as \emph{smooth graph signals}, i.e., low pass graph signals \cite{kalofolias2016learn, dong2016learning}.
In reality, the observations can occasionally be corrupted by \emph{outliers}. 
In light of this, robust graph learning algorithms have been recently proposed, e.g., \cite{berger2020efficient} for an inpainting approach, \cite{wang2022distributionally} applied distributional robust optimization, and \cite{araghi2023outlier} for an outlier-resilient algorithm. 

Our idea is to model the graph signals as being occasionally corrupted with a non low pass signal ${\bm p}_m$. Let $\HS$ be a low pass graph filter, for $m=1,\ldots,M$,
\beq \label{eq:y-pollute}
{\bm y}_m = \begin{cases}
{\cal H}( \GSO ) {\bm x}_m + {\bm w}_m &,~m \in {\cal M}_{\sf clean}, \\
 { {\bm p}_m }
+ {\bm w}_m &,~m \in {\cal M}_{\sf pollut},
\end{cases}
\eeq 
where ${\cal M}_{\sf clean}, {\cal M}_{\sf pollut}$ partitions the set $[M] = \{1, \ldots, M \}$, and ${\bm w}_m$ is the modeling/observation noise. Notice that \eqref{eq:y-pollute} is inspired by \cite{berger2020efficient} when the sampling set is small and selected randomly. Since the outlier signals can be due to anomalous sensor measurements or missing data \cite{ferrer2022volterra}, \eqref{eq:y-pollute} models the practical scenario when these events persist for a short period of time and recover afterwards. 
It is suggested to model the outlier signals as non low pass (i.e., high pass) signals on graph filters w.r.t.~possibly time varying GSO \cite{ramakrishna2020user}. 
 {For example, the latter can be modeled as ${\bm p}_m = {\cal H}_{\sf HP}( \GSO ) {\bm x}_m$, where ${\cal H}_{\sf HP}(\cdot)$ corresponds to a non low pass filter violating Definition~\ref{def:lpf}; or it can be modeled as a contaminated signal ${\bm p}_m = {\cal H}( \GSO ) {\bm x}_m + \Delta {\bm p}_m$ with $\Delta {\bm p}_m$ being a sparse vector.}

From Fig.~\ref{fig:temptuedata} (middle), we recall that learning the graph topology directly from \eqref{eq:y-pollute} via methods like GL-SigRep \cite{dong2016learning} can produce inconclusive result. 
To this end, we propose a simple solution that pre-screens the dataset by removing corrupted signals using \Cref{algo:nec}. 
Let $m_{\sf batch}$ be the batch size such that there are $M_{\sf batch} = M / m_{\sf batch}$ batches. Assuming that $K$ is known, we apply
\begin{enumerate}
    \item For $b = 1, \ldots, M_{\sf batch}$, apply Algorithm~\ref{algo:nec} on ${\bm y}_m$, $m = (b-1) m_{\sf batch} + 1, \ldots, b m_{\sf batch}$. Remove the batch if the involved signals are not $K$ low pass.
    \item Apply graph learning method such as GL-SigRep \cite{dong2016learning} or SpecTemp \cite{segarra2017network} on the remaining graph signals.
\end{enumerate}
Fig.~\ref{fig:temptuedata} (right) demonstrated the efficacy of the pre-screened graph learning procedure (via GL-SigRep) above. 
 {We remark that the parameter $m_{\sf batch}$ trades off between the accuracy of pre-screening and performance of graph learning. The success of the procedure hinges on whether ${\cal M}_{\sf pollut}$ is a rare event or not; see Sec.~\ref{sec:num} for a detailed study.}

\subsection{Detecting Signed Opinion Dynamics}
Signed graph is a common model for social networks with cooperative (trust/friendly) and antagonistic (distrust/hostile) relationships \cite{dittrich2020signal}. We set $G_s = \left( {  V}, {E}^{+}, {E}^{-} \right)$, where $V = \{1,\ldots, N\}$ denotes the set of agents, $E^{+}, E^- \subseteq V \times V$ denote the positive, negative edge sets such that $E^+ \cap E^- = \emptyset$.
Accordingly, the graph is endowed with two adjacency matrices ${\bm A}_{E^{+}}$ and ${\bm A}_{E^{-}}$, such that $[{\bm A}_{E^{+}}]_{i,j} > 0$ iff $(i,j) \in E^{+}$ and $[{\bm A}_{E^{-}}]_{i,j} < 0$ iff $(i,j) \in E^{-}$, otherwise $[{\bm A}_{E^{+}}]_{i,j} = [{\bm A}_{E^{-}}]_{i,j} = 0$. 
{We notice that signed graph learning can be achieved using classical algorithms such as \cite{friedman2008sparse} through accounting anti-correlations in data, and recent papers have proposed improved algorithms on the topic \cite{dinesh2022efficient, matz2020learning, karaaslanli2022scsgl}.}

From a system identification point of view, a relevant problem is to inquire \emph{if a social network is dominated with antagonistic relationships through observing opinion data}. We demonstrate that this problem can be approximated as a special case of \Cref{prob:det}, and therefore \Cref{algo:nec} can be applied. 
Our model is based on that of Altafini \cite{altafini2012consensus} for the opinion formation process (see \cite{proskurnikov2018tutorial} for a survey involving signed graphs) together with $R$ stubborn agents whose opinions are unaffected by others \cite{stubborn1,stubborn3}. The latter represent entities who inject opinions into the social network.
Let ${\bm y}_m(\tau)$, ${\bm z}_m(\tau)$ be opinions of regular, stubborn agents at time $\tau$, we have: 
\beq \label{eq:opi_sign}
{\bm y}_m(\tau+1) = \alpha ({\bm A}_{E^+} + {\bm A}_{E^-}){\bm y}_m(\tau) + (1-\alpha) \B{\bm z}_m,
\eeq 
where $\alpha \in (0,1)$ and $\B \in \mathbb{R}^{N\times R}$ represents the mutual trust/distrust between stubborn and regular agents. Assume normalized weighted adjacency matrices such that $|{\bm A}_{E^+}+{\bm A}_{E^-}|{\bm 1} = {\bm 1}$ and $[\alpha({\bm A}_{E^+}+{\bm A}_{E^-}), (1-\alpha)\B]{\bm 1} = {\bm 1}$.
The recursion \eqref{eq:opi_sign} is stable and admits an equilibrium state: 
\beq\label{eq:signedeq}
{\bm y} =(1-\alpha)(\I-\alpha({\bm A}_{E^+} + {\bm A}_{E^-}))^{-1}\B {\bm z}.
\eeq

Notice that our signal model in Sec.~\ref{sec:model} was defined only for \emph{unsigned graphs}. To analyze \eqref{eq:signedeq}, we utilize an \emph{unsigned surrogate adjacency matrix for $G_s$} as the GSO matrix $\GSO = \overline{{\bm A}}={\bm A}_{E^+}-{\bm A}_{E^-}$, representing an unsigned surrogate graph $G = (V, E^+ \cup E^-)$.
We observe three cases:
\begin{enumerate}[leftmargin=*]
\item When all edges are positive, \eqref{eq:signedeq} is the output of graph filter $\HS =(\I-\alpha \GSO )^{-1}$ with the excitation $(1-\alpha)\B{\bm z}$. These are $K$-low pass graph signals \cite{ramakrishna2020user}.
\item When all edges are negative, \eqref{eq:signedeq} is the output of graph filter $\HS =(\I+\alpha \GSO )^{-1}$. In this case, the graph signals are \emph{not $K$-low pass}.
\item When there are both positive and negative edges, it can be shown using a simple Taylor approximation that
\beq
\begin{aligned}
\HS &=(\I-\alpha( \GSO +2|{\bm A}_{E^-}|))^{-1}\\ 
&\approx(\I-\alpha \GSO )^{-1}+\mathcal{O}(\alpha \|{\bm A}_{E^-}\|).
\end{aligned}
\eeq
The number/strength of negative edges influence the low pass property of $\HS$. 
\end{enumerate}
Recall that the metrics ${\sf Pos}(\widehat{\su}_1)$ or $\Kmeans( \widehat{\U}_K )$ in \Cref{algo:nec} estimates the level of \emph{low-pass-ness} for a set of graph signals. Combined with the observations above, they  measure the strength of \emph{antagonistic relationships} in a social network.

 {We remark that our approach differs from recent studies on signed GSP, e.g., \cite{dinesh2022efficient}. For given graph signals, our aim is to detect if the graph is \emph{signed or not} without knowing the graph topology, while signed GSP considers the frequency analysis of the signals defined on the signed graph, where the latter topology and the signs of edges are known a-priori.
}

\subsection{Detecting Anomalies in Power Networks}
An important task in power system is to protect the latter against \emph{false data injection attack} (FDIA) which may obfuscate power system state estimator and lead to unstable behavior. As the power system operations are dependent on the grid of transmission lines and buses, GSP-derived anomaly detectors have been studied in a number of prior works \cite{Ramakrishnagrid, DrayerAC}. Most of these detectors require knowledge of the grid's topology. 
While the latter's knowledge is usually available since the power system is man-made, in some scenarios, its topology may not be precisely estimated when the grid is energized \cite{cavraro2017voltage}, e.g., some power lines may not be operational. 

Similar to the previous application example, our aim is to showcase that FDIA detection can be cast as a special case of \Cref{prob:det}, where we relate the attack-free system states as \emph{low pass graph signals} and FDIA event as \emph{non low pass signals} \cite{Ramakrishnagrid}. Subsequently, \Cref{algo:nec} can be applied regardless of error in topology estimation.
To fix idea, a power network is described by $G=(V,E)$ such that $V$ is the set of buses, and $E$ is the set of transmission lines between buses. Let ${\bm Y} \in \mathbb{C}^{N \times N}$ be the admittance matrix such that $Y_{i,j} = 0$ if $(i,j) \notin E$. 
In quasi-steady state at time $t$, we observe the voltage phasors $\sv_t$ which can be approximated as the output of a low pass graph filter ${\cal H} (\boldsymbol{Y})$ with the excitation $\boldsymbol{x}_t$
\cite[Lemma 1]{Ramakrishnagrid}:
\beq \label{eq:nofdi}
\boldsymbol{v}_{t} \approx {\cal H} (\boldsymbol{Y}) {\bm x}_t +\boldsymbol{w}_{t} = (d \mathbf{I}+\boldsymbol{Y})^{-1}\boldsymbol{x}_{t}
+\boldsymbol{w}_{t},
\eeq 
where $d\in \mathbb{C}$ is a scalar that depends on ${\bm Y}$ and $\boldsymbol{w}_{t} \in \mathbb{C}^N$ captures the slow time-varying nature of the model.
On the other hand, the observed ${\bm v}_t$ under FDIA is:
\beq \label{eq:yesfdi}
\boldsymbol{v}_{t,{\sf FDI}} \approx \mathcal{H}(\boldsymbol{Y})  \boldsymbol{x}_{t} + \bm{\delta}_t
+ \boldsymbol{w}_{t} ,
\eeq 
where $\bm{\delta}_t$ models a possibly sparse attack signal \cite{DrayerAC}, i.e., not low pass in general. 
Lastly, under the assumption that an FDIA event will persist for several samples, \Cref{algo:nec} can be applied to batches of the voltage phasor graph signals to yield a \emph{topology-free} FDIA detector. 
We envision that such a detector may be used in conjunction with existing algorithms such as \cite{Ramakrishnagrid, DrayerAC} for efficient detection of FDIA.


\section{Numerical Experiments}\label{sec:num}

\subsection{Detecting Low Pass Graph Signals}
We consider synthetic graph signals on generated random graphs to verify our analysis result. The graph signals are generated according to \eqref{eq:y_sec} with the excitation given 
by ${\bm x}_m\in \mathbb{R}^N \sim N(\mathbf{0}, \I)$ 
and the observation/modeling noise follows ${\bm w}_m \sim N\left(\mathbf{0}, \sigma^{2} \I\right) $.
For the experiments on non-modular graphs (with $K=1$), we generate $G$ as an Erdos-Renyi graph with connection probability of $p_{\sf er}=2 \log (N) / N$; for experiments with $K \geq 2$ clusters, we generate the graphs according to $G \sim {\sf SBM}(K,N,\log(N)/N,4 \log(N)/N)$. To benchmark the detection performance, 
we consider the classes for low pass or non low pass graph filters $\HS$ with the hypothesis ${\cal T}_0:e^{-\tau \nL}$ and $
{\cal T}_1: e^{\tau \nL}$, where $\tau > 0$ is a  parameter controlling the sharpness of the filters.
We also compare with a two-step detection scheme based on SpecTemp \cite{segarra2017network} (via the fast implementation by \cite{shafipour2019online}) which first learns a GSO from the stationary graph signals, and then verify if the graph signals are low pass using Definition~\ref{def:lpf}. 
All experiments are conducted with $100$ Monte-Carlo trials.

The first experiment concentrates on the case with $K=1$ and evaluates the missed detection (MD) rate, ${\rm Pr} (\widehat{\mathcal{T}}=\mathcal{T}_{1} \mid \mathcal{T}_{0})$, and false alarm (FA) rate, ${\rm Pr}(\widehat{\mathcal{T}}=\mathcal{T}_{0} \mid \mathcal{T}_{1})$ against $M$ and $N$ respectively in Fig.~\ref{fig:algo1} (left) and (right). We fixed $(N,\sigma^{2})=(120,0.01)$ in Fig.~\ref{fig:algo1} (left) and $(M,\sigma^{2})=(1000,0.01)$ in Fig.~\ref{fig:algo1} (right).
As expected from Theorem~\ref{theo:K1T}, 
larger $M$ and graph filters with sharper cutoffs (controlled by $\tau$) reduce the error rate; while larger $N$ raises the error rate.
Compared to the benchmark scheme based on SpecTemp \cite{segarra2017network} and checking Definition~\ref{def:lpf}, \Cref{algo:nec} achieves a lower MD rate at the same number of samples, and is less sensitive to graph size.


\begin{figure}[t]
\centering
\resizebox{0.99\linewidth}{!}{\sf \begin{tikzpicture}
\definecolor{green01270}{RGB}{0,127,0}

\definecolor{mycolor1}{rgb}{0.00000,0.44700,0.74100}%

\definecolor{darkgray176}{RGB}{176,176,176}
\definecolor{darkturquoise0191191}{RGB}{0,191,191}
\definecolor{lightgray204}{RGB}{204,204,204}

\begin{groupplot}[group style={group name=myplot,group size=2 by 1, horizontal sep= 2cm}]
\nextgroupplot[
legend cell align={left},
legend style={fill opacity=0.8, draw opacity=1, text opacity=1, draw=lightgray204}, font = \Large,
log basis x={10},
tick align=outside,
tick pos=left,
x grid style={darkgray176},
xlabel={\Large No.~of samples $M$},
xmajorgrids,
xmin=7.94328234724282, xmax=1258.92541179417,
xmode=log,
xtick style={color=black},
y grid style={darkgray176},
ylabel={\Large MD/FA Rate},
ymajorgrids,
ymin=-0.05, ymax=1.05,
ytick style={color=black},
width=7cm, 
height=7.5cm
]
\addplot [ultra thick, mycolor1, mark=o, mark size=4,  mark options={solid,rotate=180}]
table {%
10 0.95
20 0.91
50 0.42
100 0.27
200 0.02
500 0
1000 0
};
\addplot [ultra thick, mycolor1, dashed, mark=o, mark size=4,  mark options={solid,rotate=180}]
table {%
10 0
20 0
50 0
100 0
200 0
500 0
1000 0
};
\addplot [ultra thick, green!50!black, mark=triangle, mark size=4,  mark options={solid,rotate=180}  ]
table {%
10 0
20 0
50 0
100 0
200 0
500 0
1000 0
};
\addplot [ultra thick, green!50!black, dashed, mark=triangle, mark size=4,  mark options={solid,rotate=180} ]
table {%
10 0
20 0
50 0
100 0
200 0
500 0
1000 0
};

\addplot [ultra thick, red!50, mark=square, mark size=4, mark options={solid}]
table {%
10	1
20	1
50	1
100	0.91
200	0.74
500	0.35
1000	0.11
};
\addplot [ultra thick, red!50, dashed, mark=square, mark size=4, mark options={solid}]
table {%
10 0
20 0
50 0
100 0
200 0
500 0
1000 0
};

\coordinate (top) at (rel axis cs:0,1);

\nextgroupplot[
legend cell align={left},
legend style={
  fill opacity=0.8,
  draw opacity=1,
  text opacity=1,
  at={(0.03,0.97)},
  font = \Large,
  anchor=north west,
  draw=lightgray204
},
log basis x={10},
tick align=outside,
tick pos=left,
font = \Large,
x grid style={darkgray176},
xlabel={\Large Graph Size $N$},
xmajorgrids,
xmin=8.22340159426889, xmax=608.020895329329,
xmode=log,
xtick style={color=black},
y grid style={darkgray176},
ymajorgrids,
ymin = -0.01, ymax=0.4,
ytick style={color=black},
width=7cm, 
height=7.5cm
]
\addplot [ultra thick, mycolor1, mark=o, mark size=4,  mark options={solid, rotate=180}]
table {%
10 0
20 0
50 0
120 0
250 0
500 0.02
};\label{plots:1MD}
\addplot [ultra thick, mycolor1, dashed, mark=o, mark size=4,  mark options={solid, rotate=180}]
table {%
10 0
20 0
50 0
120 0
250 0
500 0
};\label{plots:1FA}
\addplot [ultra thick, green!50!black,mark=triangle, mark size=4,  mark options={solid, rotate=180} ]
table {%
10 0
20 0
50 0
120 0
250 0
500 0
};\label{plots:5MD}
\addplot [ultra thick, green!50!black, dashed, mark=triangle, mark size=4,  mark options={solid, rotate=180}]
table {%
10 0
20 0
50 0
120 0
250 0
500 0
};\label{plots:5FA}

\addplot [ultra thick, red!50, mark=square, mark size=4, mark options={solid}]
table {%
10	0.066
20	0.038
50	0.036
120	0.104
250	0.358
500	0.362
};\label{plots:GSOMD}

\addplot [ultra thick, red!50, dashed, mark=square, mark size=4, mark options={solid}]
table {%
10 0
20 0
50 0
120 0
250 0
500 0
};\label{plots:GSOFA}

\coordinate (bot) at (rel axis cs:1,0);

\end{groupplot}
\path (top|-current bounding box.north)--
      coordinate(legendpos)
      (bot|-current bounding box.north);
\matrix[
    matrix of nodes,
    anchor=south,
    draw,
    inner sep=0.2em,
    draw,
    font = \Large,
  ]at([yshift=1.5ex]legendpos)
  {\ref{plots:1MD}& $J=1$ MD&[5pt]
\ref{plots:5MD}& $J=5$ MD&[5pt]
\ref{plots:GSOMD}& SpecTemp MD&[5pt]\\
\ref{plots:1FA}& $J=1$ FA&[5pt]
\ref{plots:5FA}& $J=5$ FA&[5pt]
\ref{plots:GSOFA}& SpecTemp FA\\};

\end{tikzpicture}}\vspace{-.2cm}
\caption{\textbf{Error performance for detecting $1$-low pass graph signals} against (Left) sample size $M$, (Right) graph size $N$. We set $J=1$ for the experiment with SpecTemp \cite{segarra2017network}.}\vspace{-.2cm}
\label{fig:algo1}
\end{figure}

The second experiment considers the case with $K \geq 2$ and the filter parameter is fixed at $\tau = 0.6$. As the threshold $\delta$ in Algorithm~\ref{algo:nec} trades off between the false alarm and missed detection rates, we measure the detection performance through the AUROC score.
Fig.~\ref{fig:algo2} (top) show the performance of Algorithm~\ref{algo:nec} against $M$ while setting $(N,\sigma^{2})=(120,0.01)$, and $N$ while setting $(M,\sigma^{2})=( {0.01 N^2}  ,0.01)$. Note that in the latter case, we set $M = \Theta(N^2)$ to compensate for the increase of signal dimension [cf.~\eqref{eq:sample_complexity_K}].  
Again as predicted by Theorem~\ref{theo:Km_sample}, we observe that the detection performance improves ($\text{AUROC} \to 1$) as $M$, $N$ increase. 
{Additionally, we consider a case when $K$ is unknown and is estimated using the heuristic in Remark~\ref{remark:K}. Observe that the performance has only dropped slightly.}
Compared to the benchmark scheme based on SpecTemp \cite{segarra2017network}, we observe improved performance across different sample and graph sizes for the proposed \Cref{algo:nec}.

{The third experiment considers the sensitivity of \Cref{algo:nec} to the filter's sharpness and the modularity of the graphs. We fix $\sigma^2 = 0.01$, $N=120$, $M=50$, and $r = \log N / N$. Fig.~\ref{fig:algo2} (bottom)
plots the AUROC performance of Algorithm~\ref{algo:nec} against the filter's parameter $\tau$ while setting $p=4 \log N/N$, and against the graph's modularity $p$ while setting $\tau = 0.6$.
Observe the detection performance of Algorithm~\ref{algo:nec} improves as $\tau$ (filter's sharpness) increases and $p$ (modularity) increases, confirming the analysis in Sec.~\ref{sec:ana}.
}

\begin{figure}[t]
\centering
\resizebox{0.95\linewidth}{!}{\sf \begin{tikzpicture}

\definecolor{mycolor1}{rgb}{0.00000,0.44700,0.74100}%

\definecolor{darkgray176}{RGB}{176,176,176}
\definecolor{darkturquoise0191191}{RGB}{0,191,191}
\definecolor{darkviolet1910191}{RGB}{191,0,191}
\definecolor{lightgray204}{RGB}{204,204,204}
\definecolor{green01270}{RGB}{0,127,0}
\definecolor{mycolor3}{rgb}{0.92900,0.69400,0.12500}%
\begin{groupplot}[group style={group name=myplot,group size=2 by 1}]
\nextgroupplot[
legend cell align={left},
legend style={
  fill opacity=0.8,
  draw opacity=1,
  text opacity=1,
  at={(0.97,0.05)},
  anchor=south west,
  draw=lightgray204
},
log basis x={10},
tick align=outside,
tick pos=left,
x grid style={darkgray176},
font = \Large,
xlabel={\Large No. of samples $M$},
xmajorgrids,
xmin=7.94328234724282, xmax=1258.92541179417,
xmode=log,
xtick style={color=black},
y grid style={darkgray176},
ylabel={{\Large AUROC}},
ymajorgrids,
ymin=0.45, ymax=1.03,
ytick style={color=black},
width=7.5cm, 
height=8.5cm
]
\addplot [ultra thick, mycolor1, mark=diamond, mark size=4, mark options={solid}]
table {%
10 0.828
20 0.9355
50 0.9902
100 0.9971
200 1
500 1
1000 1
};
\addplot [ultra thick, mycolor1, dashed, mark=o, mark size=4, mark options={solid}]
table {%
10 0.842
20 0.92
50 0.9835
100 1
200 1
500 1
1000 1
};

\addplot [ultra thick, green!50!black, mark=triangle, mark size=4, mark options={solid,rotate=180}]
table {%
10 1
20 1
50 1
100 1
200 1
500 1
1000 1
};

\addplot [ultra thick, green!50!black, dashed, mark=triangle, mark size=4,mark options={solid, rotate=180} ]
table {%
10 1
20 1
50 1
100 1
200 1
500 1
1000 1
};

\addplot [ultra thick, red!50, mark=square, mark size=4, mark options={solid}]
table {%
10  0.5
20	0.5
50	0.5
100	0.5
200	0.505
500	0.58
1000  0.77
};

\addplot [ultra thick, darkturquoise0191191, mark=diamond, mark size=4,  mark options={solid}]
table {%
10 0.6589
20 0.7948
50 0.912
100 0.9888
200 1
500 1
1000 1
};

\addplot [ultra thick, darkturquoise0191191, dashed, mark=o, mark size=4,  mark options={solid}]
table {%
10 0.657
20 0.7357
50 0.8807
100 0.9564
200 0.9982
500 1
1000 1
};



\coordinate (top) at (rel axis cs:0,1);

\nextgroupplot[
legend cell align={left},
legend style={
  fill opacity=0.8,
  draw opacity=1,
  text opacity=1,
  at={(0.05,0.3)},
  anchor=south west,
  draw=lightgray204
},
log basis x={10},
tick align=outside,
tick pos=left,
x grid style={darkgray176},
xlabel={\Large Graph Size $N$},
xmajorgrids,
xmin=28, xmax=560,
xmode=log,
font = \Large,
xtick style={color=black},
y grid style={darkgray176},
ymajorgrids,
ymin=0.45, ymax=1.03,
ytick style={color=black},
width=7.5cm, 
height=8.5cm
]
\addplot [ultra thick, mycolor1, mark=diamond, mark size=4, ]
table {%
32 0.9502
52 0.9834
84 0.9894
120 1
240 1
480 1
};\label{plots:12}

\addplot [ultra thick, mycolor1, dashed, mark=o, mark size=4, mark options={solid}]
table {%
32 0.931
52 0.977
84 0.999
120 1
240 1
480 1
};\label{plots:14}

\addplot [ultra thick, green!50!black, mark=triangle, mark size=4, mark options={rotate=180}]
table {%
32 1
52 1
84 1
120 1
240 1
480 1
};\label{plots:52}

\addplot [ultra thick, green!50!black, dashed, mark=triangle, mark size=4, mark options={solid, rotate=180}]
table {%
32 1
52 1
84 1
120 1
240 1
480 1
};\label{plots:54}

\addplot [ultra thick, red!50, mark=square, mark size=4,  mark options={solid}]
table {%
32	0.545
52	0.535
84	0.505
120	0.5
240	0.5
480	0.83
};\label{plots:SpecTemp}

\addplot [ultra thick, darkturquoise0191191,  mark=diamond, mark size=4,  mark options={solid}]
table {%
32 0.8649
52 0.9209
84 0.9602
120 0.9986
240 1
480 1
};\label{plots:12est}

\addplot [ultra thick, darkturquoise0191191, dashed, mark=o, mark size=4,  mark options={solid}]
table {%
32 0.8748
52 0.8646
84 0.9066
120 0.9779
240 1
480 1
};\label{plots:14est}


\coordinate (bot) at (rel axis cs:1,0);

\end{groupplot}

\path (top|-current bounding box.north)--
      coordinate(legendpos)
      (bot|-current bounding box.north);
\matrix[ 
    matrix of nodes,
    anchor=south,
    draw,
    inner sep=0.2em,
    draw,
    font = \large,
  ]at([yshift=1.5ex]legendpos)
  {\ref{plots:12}& $(J,K)=(1,2)$&[2pt]
\ref{plots:14}& $(J,K)=(1,4)$&[2pt]
\ref{plots:52}& $(J,K)=(5,2)$&[2pt]\\
\ref{plots:12est}& $(J,K)=(1,2),{\hat K}_{\sf est}$&[2pt]
\ref{plots:14est}& $(J,K)=(1,4),{\hat K}_{\sf est}$&[2pt]
\ref{plots:54}& $(J,K)=(5,2), {\hat K}_{\sf est}$ &[2pt]\\
\ref{plots:SpecTemp}& SpecTemp&[2pt]
\\};

\end{tikzpicture}}\vspace{.1cm}

\resizebox{0.95\linewidth}{!}{\sf 

\begin{tikzpicture}
\definecolor{green01270}{RGB}{0,127,0}

\definecolor{mycolor1}{rgb}{0.00000,0.44700,0.74100}%

\definecolor{darkgray176}{RGB}{176,176,176}
\definecolor{darkturquoise0191191}{RGB}{0,191,191}
\definecolor{lightgray204}{RGB}{204,204,204}


\begin{groupplot}[group style={group name=myplot,group size=2 by 1}]
\nextgroupplot[
legend cell align={left},
legend pos = north west,
legend style={
  fill opacity=0.8,
  draw opacity=1,
  text opacity=1, 
  font=\Large,
  draw=lightgray204
},
tick align=outside,
tick pos=left,
x grid style={darkgray176},
xlabel={\Large Filter Parameter $\tau$},
xmajorgrids,
font = \Large,
xmin=0.055, xmax=1.045,
xtick style={color=black},
y grid style={darkgray176},
ylabel={\Large AUROC},
ymajorgrids,
ymin=0.45, ymax=1.05,
ytick style={color=black},
width=7.5cm, 
height=8.5cm
]
\addplot [ultra thick, mycolor1, mark=diamond, mark size=4, ]
table {%
0.1 0.607504
0.2 0.698184
0.3 0.800412
0.4 0.877496
0.5 0.946876
0.6 0.979644
0.7 0.988976
0.8 0.997028
0.9 0.99878
1 1
};
\addplot [ultra thick, darkturquoise0191191,  mark=diamond, mark size=4]
table {%
0.1 0.50875
0.2 0.5712
0.3 0.696
0.4 0.7547
0.5 0.799925
0.6 0.87305
0.7 0.939725
0.8 0.9666
0.9 0.99395
1 0.995625
};

\addplot [ultra thick, red!50, mark=square, mark size=4]
  table{%
0.1	0.5
0.2	0.5
0.3	0.51
0.4	0.5
0.5	0.505
0.6	0.53
0.7	0.595
0.8	0.585
0.9	0.645
1	0.62
};

\coordinate (top) at (rel axis cs:0,1);

\nextgroupplot[
legend cell align={left},
legend style={
  fill opacity=0.8,
  draw opacity=1,
  text opacity=1,
  at={(0.03,0.97)},
  anchor=north west,
  font = \Large,
  draw=lightgray204
},
tick align=outside,
tick pos=left,
x grid style={darkgray176},
xlabel={\Large Modularity of Graph $\bar{p}$},
xmajorgrids,
xmin=-0.245, xmax=7.345,
font = \Large,
xtick style={color=black},
y grid style={darkgray176},
ymajorgrids,
ymin=0.45, ymax=1.05,
ytick style={color=black},
width=7.5cm, 
height=8.5cm
]
\addplot [ultra thick, mycolor1, mark=diamond, mark size=4, ]
table {%
0.1 0.841258
0.2 0.869978
0.5 0.902954
1 0.927029
2 0.952961
3 0.966068
4 0.970284
5 0.971058
6 0.976208
7 0.991249
};
\addplot [ultra thick, darkturquoise0191191,  mark=diamond, mark size=4]
table {%
0.1 0.566275
0.2 0.629964
0.5 0.678675
1 0.746354
2 0.81631
3 0.861424
4 0.880417
5 0.89944
6 0.904552
7 0.900563
};

\addplot [ultra thick, red!50, mark=square, mark size=4]
  table{%
0.1	0.5
0.2	0.5
0.5	0.5
1	0.505
2	0.505
3	0.505
4	0.505
5	0.525
6	0.52
7	0.52
};

\end{groupplot}

\end{tikzpicture}}\vspace{-.1cm}
\caption{\textbf{AUROC performance} against (Top-Left) sample size $M$; (Top-Right) graph size $N$, 
where we set $J=1,K=2$ for the experiments with SpecTemp \cite{segarra2017network};
(Bottom-Left) filter's parameter $\tau$; (Bottom-Right) SBM parameter $\bar{p}$ such that $p=\bar{p}\log N/N$.
{In the above, $\hat{K}_{\sf est}$ refers to the heuristics in Remark~\ref{remark:K} which  estimates $K$ from data.}
}
\vspace{-.1cm}
\label{fig:algo2}
\end{figure}


Finally, we verify the robustness of Algorithm~\ref{algo:nec} to the graph topology on an actual network that is \emph{not generated by the SBM}, and detecing via graph signals that are not excited by \emph{white} input graph signal. This setup violates some of the assumptions required by our analysis. We consider the {\tt Political Books} network [available: \url{http://www.orgnet.com/}] with $N=105$ nodes and $|E| = 441$ edges, as illustrated in Fig.~\ref{fig:political_book} (left). The graph has roughly $K=2$ clusters.  

\begin{figure}[t]
\centering
\subfigure{\resizebox{0.495\linewidth}{!}{\sf \includegraphics{fig/fig7-1.png}}
}%
\subfigure{\resizebox{0.5\linewidth}{!}{\sf 



\begin{tikzpicture}

\definecolor{mycolor1}{rgb}{0.00000,0.44700,0.74100}%
\definecolor{darkgray176}{RGB}{176,176,176}
\definecolor{darkturquoise0191191}{RGB}{0,191,191}
\definecolor{green01270}{RGB}{0,127,0}
\definecolor{lightgray204}{RGB}{204,204,204}

\begin{axis}[
legend cell align={left},
legend style={
  fill opacity=0.8,
  draw opacity=1,
  text opacity=1,
  at={(0.97,0.03)},
  anchor=south east,
  draw=lightgray204
},
log basis x={10},
tick align=outside,
tick pos=left,
x grid style={darkgray176},
xlabel={No.~of samples $M$},
xmajorgrids,
xmin=7.94328234724282, xmax=1258.92541179417,
xmode=log,
xtick style={color=black},
y grid style={darkgray176},
ylabel={AUROC},
ymajorgrids,
ymin=0.719545, ymax=1.013355,
ytick style={color=black},
height=6cm,
width=5.5cm
]
\addplot [ultra thick, mycolor1, mark=triangle*, mark size=3, mark options={rotate=180}]
table {%
10 0.7329
20 0.7946
50 0.8924
100 0.9435
200 0.9891
500 0.9903
1000 1
};
\addlegendentry{$(R,\tau)=(10,0.2)$}
\addplot [ultra thick, mycolor1, dashed, mark=triangle*, mark size=3, mark options={rotate=180}]
table {%
10 0.8035
20 0.8937
50 0.9531
100 0.996
200 0.9996
500 0.9994
1000 1
};
\addlegendentry{$(R,\tau)=(25,0.2)$}
\addplot [ultra thick, green01270, mark=o, mark size=3]
table {%
10 0.9058
20 0.9475
50 0.9945
100 1
200 1
500 1
1000 1
};
\addlegendentry{$(R,\tau)=(10,0.5)$}
\addplot [ultra thick, green01270, dashed, mark=o,mark size=3]
table {%
10 0.9578
20 0.9851
50 1
100 1
200 1
500 1
1000 1
};
\addlegendentry{$(R,\tau)=(25,0.5)$}
\end{axis}

\end{tikzpicture}}
}\vspace{-.2cm}
\caption{\textbf{Political Books network.} (Left) The graph topology with $K=2$ clusters. (Right) AUROC performance against $M$ with \Cref{algo:nec}.
}\vspace{-.2cm}
\label{fig:political_book}
\end{figure}

We consider $R \in \{10,25\}$ sources simulated as exogeneous sources of excitation to the graph. Accordingly, the input signals to the graph filter is generated as ${\bm x}_m=\B {\bm z}_m, {\bm z}_m\in \mathbb{R}^R\sim N(\mathbf{0}, \I)$, 
where $\B$ is a sparse bipartite graph with connectivity $2\log N/N$ between the $R$ sources and the existing $N$ nodes on the graph.
The low and high pass graph filters are respectively generated as $\HS = ({\bm I}+\tau_2\nL)^{-1}$ and $\HS = {\bm I}+\tau_2\nL$.
Fig.~\ref{fig:political_book} (right) shows the AUROC performance against the number of samples $M$ collected. Observe that AUROC$\to 1$ as $M, R$ increase. The result illustrates that despite not all assumptions in Theorem~\ref{theo:Km_sample} are satisfied, Algorithm~\ref{algo:nec} remains effective in tackling Problem~\ref{prob:det}.\vspace{-.2cm}

\subsection{Application Examples}\vspace{-.1cm}
This subsection illustrates numerical results from applying the proposed low pass detection algorithm to {\sf (I)} robustify graph topology learning from corrupted signals, {\sf (II)} detecting antagonistic behavior in social networks, {\sf (III)} detecting anomaly status in power systems.\vspace{.1cm}


\begin{figure}
    \centering
    \resizebox{0.6\linewidth}{!}{\sf\includegraphics{fig/fig8.png}}
    \caption{ {\textbf{Batch-mode and random-mode pollution}, where {\color{red}red} part indicates the positions of corrupted signals}}
    \vspace{-.5cm}
    \label{fig:data-gen}
\end{figure}

\begin{figure*}[t]
\resizebox{0.97\linewidth}{!}{\sf \input{fig/fig9}}
\vspace{-.25cm}
  \caption{\textbf{AUROC performance under polluted/corrupted data.}  AUROC (left) against sparsity of outlier $p_s$ under batch-mode pollution with sparse pollution;  {(Middle-left) against batch size $M_{\sf batch}$ under batch-mode pollution with sparse pollution at $p_s=0.5$.} (Middle-right) against size of corrupted batch $M_{\sf ba}$ under batch-mode pollution with \cite{wang2022distributionally}'s model. (Right) against sparsity of outlier $p_s$ under random-mode pollution. 
} \vspace{-.25cm} \label{fig:graph_auc}
\end{figure*}

\subsubsection{Robustifying Graph Topology Learning}
Consider application {\sf (I)} 
via pre-screening with \Cref{algo:nec}. We aim to evaluate the approach on synthetic graph models and signals. Particularly, we simulate a graph with $N=20$ nodes, $G \sim {\sf SBM}(2,N, \log(N)/N,2 \log(N)/N)$, and the graph filter is given by $e^{\nL}$. The number of samples and observation noise variance are $(M,\sigma^2)=(2000,0.01)$.
To describe ${\cal M}_{\sf pollut}$, we simulate signals that are corrupted in a batch-mode manner, or are corrupted uniformly at random. For the batch-mode setting, among the $M$ samples, we randomly select ten time indices $\{t_1,\ldots,t_{10}\}$ as the starting time for signal corruption. Subsequent signals are randomly contaminated in batches with the duration $M_{\tt ba} \geq 1$, as illustrated by   {Fig.~\ref{fig:data-gen} (top)}. In other words, we have ${\cal M}_{\sf pollut} = \cup_{i=1}^{10} \{ t_i, \ldots, t_i+M_{\tt ba}-1 \}$. The uniform corruption setting is similar to the above where ${\cal M}_{\sf pollut}$ is selected by randomly picking $10\%$ of the indices from $\{1,\ldots,2000\}$, as illustrated by   {Fig.~\ref{fig:data-gen} (bottom)}. 

We evaluate the graph topology learnt using GL-SigRep or SpecTemp combined with the proposed pre-screening scheme by \Cref{algo:nec} (denoted `{\sf Pre-screen GL-SigRep/SpecTemp}'). As benchmarks, we also compare with {\sf Inpaint} model \cite{berger2020efficient}, {\sf OR-GL} \cite{araghi2023outlier}, {\sf LS-PGD} \cite{wang2022distributionally}, the plain {\sf GL-SigRep} \cite{dong2016learning} and {\sf SpecTemp} \cite{segarra2017network}. Note that the last two algorithms are not designed for graph topology learning with corrupted signals.
The performance of graph learning is assessed via AUROC score. 
For the pre-screening procedure, we divide the graph signal dataset into batches of $m_{\sf batch} = 50$ samples, and apply \Cref{algo:nec} with the threshold $\delta$. 
The {\sf Inpaint} model \cite{berger2020efficient} and the {\sf OR-GL} \cite{araghi2023outlier} underwent 20 Monte-Carlo trials each, whereas the remaining simulations were conducted with at least 200 Monte-Carlo trials.


We test the graph learning performance under batch-mode corruption. Fig.~\ref{fig:graph_auc}  {(left)} shows the graph learning performance against the density of sparse noise $p_s$ while fixing the contamination duration $M_{\sf ba} = 20$.
For each $m \in {\cal M}_{\sf pollut}$, the corrupted signal  {${\bm p}_m = \HS {\bm x}_m + \Delta {\bm p}_m$} is generated by contaminating the low pass signal $\HS {\bm x}_m$ with missed observations and sparse outlier noise,  {the latter is modeled using a sparse vector $\Delta {\bm p}_m$} -- $10\%$ of the entries will be missed and a $p_s N$-sparse noise will be added with uniformly selected coordinates. For the sparse noise, its non-zero entries are sampled from ${\cal N}( 3 \delta_y, 0.2 \delta_y )$ with $\delta_y = \max_{m} \| \HS {\bm x}_m \|_\infty$. The latter model is designed such that the contaminated graph signals can evade detection by simple schemes such as thresholding on  magnitudes.
Moreover, \Cref{algo:nec} with $K=2$ is applied with $\delta = 0.8$. 
In the figure, we observe that the pre-screened GL-SigRep scheme achieves competitive graph learning performance especially when $p_s$ is large. The performance improvement is significant compared to original GL-SigRep. The pre-screening procedure robustifies graph learning against outliers corruption \eqref{eq:y-pollute}. 
 {Additionally, Fig.~\ref{fig:graph_auc} (mid-left) follows the same simulation setting as Fig.~\ref{fig:graph_auc} (left) with $p_s=0.5$. This figure demonstrates a performance tradeoff when deciding the batch size $m_{\sf batch}$ of the pre-screening procedure. We observe that the optimal $m_{\sf batch}$ is around $20$-$80$ which is in the same order as $M_{\sf ba} = 20$.
}

The second example considers batch-mode corruption with the outlier signal model taken from LS-PGD in \cite[Eq.~(5)]{wang2022distributionally}. Here, the outlier model includes both \emph{uncertainty and noise}: for any $m \in {\cal M}_{\sf pollut}$, we have ${\bm y}_m \sim {\cal N}( \bm{\mu}^\star, \GSO^\dagger + {\bm I} )$ where $\bm{\mu}^\star \sim {\cal N}( {\bm 0}, {\bm I} )$. Fig.~\ref{fig:graph_auc}  {(mid-right)} compares the graph learning performance against the corruption duration $M_{\sf ba}$, where we applied the pre-screened GL-SigRep scheme with the threshold $\delta = 0.6$. Observe that the pre-screened scheme delivers favorable performance across the tested range of $M_{\sf ba}$.

Lastly, Fig.~\ref{fig:graph_auc}  {(right)} compares the performance of graph learning under the uniform corruption setting with a 10\% contamination rate. The other simulation setting follows from Fig.~\ref{fig:graph_auc}  {(left)}. Note that this setting favors {\sf OR-GL} \cite{araghi2023outlier}. We observe that the pre-screened GL-SigRep scheme performs worse, yet it still outperforms the other benchmarks such as directly applying GL-SigRep on the corrupted data.

\subsubsection{Detecting Antagonistic Ties in Opinion Dynamics}
{As discussed in Sec.~\ref{sec:ex}, opinion data tend to appear as low pass (resp.~high pass) when the social network is dominated by friendly (resp.~antagonistic) ties. This observation inspires us to use the normalized score ${\sf score}_{\sf Alg1} := \frac{1}{K} \Kmeans( \widehat{\bm U}_K )$ in \Cref{algo:nec} to measure the strength of antagonistic ties. As benchmarks, we compare two signed graph learning methods: GLASSO in \cite{friedman2008sparse}, and the method in \cite{matz2020learning}. The two methods learn the weighted signed graph. To quantify the strength of antagonistic ties, we consider the normalized score 
\beq \label{eq:score}
{\sf score}_{\sf GL} = \frac{ \| {\bm A}_{E^{-}} \|_{\fro} }{ \| |{\bm A}_{E^+}| + |{\bm A}_{E^-} | \|_{\fro}}, 
\eeq
such that ${\bm A}_{E^{-}}$, ${\bm A}_{E^{+}}$ denote the adjacency matrix of negative, positive edges learnt, respectively. For ease of comparison, these scores are normalized to the range $[0,1]$, where a smaller (larger) value suggests more friendly (antagonistic) ties\footnote{We acknowledge that as the goal for \Cref{algo:nec} is different from \cite{friedman2008sparse, matz2020learning} by nature, it is impossible to make a completely fair comparison. Our examples serve as a reference to demonstrate that the proposed detector for antagonistic ties produce reasonable result with little computation overhead.}.}


\begin{figure}
\centering
\resizebox{0.99\linewidth}{!}{\sf\definecolor{darkgray176}{RGB}{176,176,176}
\definecolor{darkturquoise0191191}{RGB}{0,191,191}
\definecolor{green01270}{RGB}{0,127,0}
\definecolor{lightgray204}{RGB}{204,204,204}
\definecolor{mycolor2}{rgb}{0.00000,0.44700,0.74100}%
\definecolor{mycolor3}{rgb}{0.85000,0.32500,0.09800}%
\definecolor{mycolor1}{rgb}{0.92900,0.69400,0.12500}%
\definecolor{mycolor4}{rgb}{0.49400,0.18400,0.55600}%
\definecolor{mycolor5}{rgb}{0.46600,0.67400,0.18800}%
%
%
\begin{tikzpicture}

\begin{groupplot}[group style={group name=myplot,group size=2 by 1}]
\nextgroupplot[
legend cell align={left},
legend pos = north west,
legend style={
  fill opacity=0.8,
  draw opacity=1,
  text opacity=1,
   font=\Large,
  draw=lightgray204
},
tick align=outside,
tick pos=left,
x grid style={darkgray176},
xlabel={\LARGE $|E^{-}| / |E|$},
xmajorgrids,
xmin=0,
xmax=0.6,
xtick style={color=black},
 y grid style={darkgray176},
ylabel={\Large Normalized Scores},
ymajorgrids,
ymin=0,
ymax=0.8,
ytick style={color=black},
width=6.5cm, 
height=7.5cm
]
\addplot [color=mycolor2, line width=1.5pt, mark= *, mark size = 4, mark options={solid, mycolor2}]
  table[row sep=crcr]{%
0	0.111555603537955\\
0.05	0.156599325718283\\
0.1	0.198654956314042\\
0.2	0.33190395082266\\
0.3	0.525660694532152\\
0.4	0.640996049443781\\
0.6	0.666098928768766\\
};

\addplot [densely dashed, color=green!50!black, line width=1.5pt, mark=o, mark size = 4, mark options={solid, green!50!black}]
  table[row sep=crcr]{%
0	4.22526620908383e-10\\
0.05	0.0399060554586674\\
0.1	0.136374966511475\\
0.2	0.395267819815029\\
0.3	0.549593864307343\\
0.4	0.643160215631223\\
0.6	0.731611465146178\\
};

\addplot [densely dashed, color=red,  line width=1.5pt, mark=square, mark size = 4, mark options={solid, red}]
  table[row sep=crcr]{%
0	0.00944562097369884\\
0.05	0.0428708722070309\\
0.1	0.113948166161689\\
0.2	0.325055216556284\\
0.3	0.515743229761056\\
0.4	0.633533747072444\\
0.6	0.749042765601714\\
};

\coordinate (top) at (rel axis cs:0,1);

\nextgroupplot[%
legend style={yshift=-2.5cm},
xshift = 1cm,
xmajorgrids,
xmin=0,
xmax=0.6,
xtick style={color=black},
xlabel style={font=\color{white!15!black}},
tick align=outside,
tick pos=left,
x grid style={darkgray176},
y grid style={darkgray176},
ytick style={color=black},
xlabel={\LARGE $|E^{-}| / |E|$},
ymode=log,
ymin=1e-3,
ymax=1e3,
ylabel style={font=\color{white!15!black}},
ylabel={\Large Avg.~Running Time(s)},
xmajorgrids,
ymajorgrids,
width=6.5cm, 
height=7.5cm
]

\addplot [color=mycolor2, line width=1.5pt, mark=*, mark size = 4, mark options={solid, mycolor2}]
  table[row sep=crcr]{%
0	0.01681885998\\
0.05	0.00448990422\\
0.1	0.00419783076\\
0.2	0.00480204584\\
0.3	0.00493729578\\
0.4	0.0045071267\\
0.6	0.00421325662\\
};
\addlegendentry{Algorithm~\ref{algo:nec}}

\addplot [densely dashed, color=green!50!black, line width=1.5pt,  mark=o, mark size = 4, mark options={solid, green!50!black}]
  table[row sep=crcr]{%
0	269.20459339\\
0.05	123.51822178328\\
0.1	79.91764500998\\
0.2	46.5680457384\\
0.3	42.7516077551\\
0.4	42.30209809502\\
0.6	41.95416399078\\
};
\addlegendentry{\cite{friedman2008sparse}, $\rho_{\sf GL} = 10^3$}

\addplot [densely dashed, color=red, line width=1.5pt, mark=square, mark size = 4, mark options={solid, red}]
  table[row sep=crcr]{%
0	18.28328223502\\
0.05	18.04229696412\\
0.1	17.86748534\\
0.2	18.395091435\\
0.3	18.16909106752\\
0.4	18.13943361418\\
0.6	18.3208259517\\
};
\addlegendentry{\cite{matz2020learning}, $\mu_{\sf SGL}=10^3$}

\end{groupplot}
\end{tikzpicture}
\caption{{\bf Synthetic Signed Graph.} (Left) Normalized scores of detecting antagonistic ties against $|E^{-}| / |E|$. (Right) Average runtime against $|E^{-}| / |E|$.}
\vspace{-.3cm}
\label{fig:signed4}
\end{figure}

{Our first experiment aims at verifying the above application of \Cref{algo:nec} on synthetic data. We set $G \sim {\sf SBM}(2,N, \log(N)/N,4 \log(N)/N)$ with $N=100$ nodes. The observation noise satisfies $\sigma^2 = 0.01$ and we consider generating  $M=200$ samples using the model in \eqref{eq:signedeq} with $\alpha=0.8$ and ${\bm B}={\bm I}$. For GLASSO, we pick the regularization parameter as $\rho_{\sf GL} = 1000$, and for \cite{matz2020learning}, we pick $\mu_{\sf SGL} = 1000$.
Fig.~\ref{fig:signed4} compares ${\sf score}_{\sf Alg1}, {\sf score}_{\sf GL}$ and the average runtime against the portion of negative edges in the ground truth graph where a portion of $|E^{-}| / |E|$ edges are randomly flipped to negative, with $50$ Monte-Carlo trials. All algorithms detect the increased portion of antagonistic ties in the ground truth signed graphs, as indicated by the increasing detection scores as $| E^- | / |E| \to 1$. \Cref{algo:nec} has a significantly lower runtime than other benchmarks. 
}



Next, we apply \Cref{algo:nec} to a US Senate rollcall dataset. The dataset is taken from the $117$th US Congress [available: \url{https://voteview.com}], recording $M = 949$ rollcalls of votes made by $N=97$ members. The $M$ rollcalls are divided into 4 groups based on the attribute `vote question' as: ``On the Nomination", ``On the Cloture Motion", ``On the Amendment" and others. Each group is exemplified by rollcalls of different nature, as shown in Table~\ref{fig:fscore}. We postulate that the Senators' networks exhibit different levels of antagonistic ties in each of the group. For example, the opinion formation process on nomination of government positions (``On the Nomination'') may display more distrusts as opposed to the process on modifying a bill (``On the Amendment'').  

\colorlet{tc11}{orange!37}
\colorlet{tc12}{orange!39}
\colorlet{tc13}{orange!2}
\colorlet{tc14}{orange!6}
\colorlet{tc21}{orange!14} 
\colorlet{tc22}{orange!18}
\colorlet{tc23}{orange!5}
\colorlet{tc24}{orange!10}
\colorlet{tc31}{orange!20} 
\colorlet{tc32}{orange!25}
\colorlet{tc33}{orange!0}
\colorlet{tc34}{orange!6}
\colorlet{tc41}{orange!5} 
\colorlet{tc42}{orange!1}
\colorlet{tc43}{orange!0}
\colorlet{tc44}{orange!0}
\colorlet{tc51}{orange!17} 
\colorlet{tc52}{orange!22}
\colorlet{tc53}{orange!0}
\colorlet{tc54}{orange!7}
\colorlet{tc61}{orange!29} 
\colorlet{tc62}{orange!37}
\colorlet{tc63}{orange!1}
\colorlet{tc64}{orange!18}
\begin{table}[t]
\centering
\fbox{
\parbox[b]{0.45\linewidth}
{\tiny
\textbf{On the Nomination (N)}: 
\emph{e.g.,} ``Thomas J. Vilsack, of Iowa, to be Secretary of Agriculture", ``Rahm Emanuel, of Illinois, to be Ambassador to Japan", ...\vspace{0.1cm}\\
\textbf{On the Cloture Motion (C)}: 
\emph{e.g.,} ``Beth Robinson, of Vermont, to be United States Circuit Judge for the Second Circuit", ``Douglas R. Bush, of Virginia, to be an Assistant Secretary of the Army",  ...\vspace{0.1cm}\\
\textbf{On the Amendment (A)}: 
\emph{e.g.,} ''To establish a deficit-neutral reserve fund relating to COVID-19 vaccine administration and a public awareness campaign", ``In the nature of a substitute", ``To improve the bill", ...\vspace{0.1cm}\\
\textbf{Others (O)}: 
\emph{e.g.,} ``A bill to provide for reconciliation pursuant to title II of S. Con. Res. 5", ``A resolution impeaching Donald John Trump, President of the United States, for high crimes and misdemeanors", ...
}
}~~\resizebox{0.48\linewidth}{!}{\sf \raisebox{3.3cm}
{
{\large
\begin{tabular}{c|c|c|c|c}
\toprule
 & \textbf{N} & \textbf{C} & \textbf{A} & \textbf{O} \\
\midrule
\bfseries \Cref{algo:nec} &  \cellcolor{tc11}  & \cellcolor{tc12} & \cellcolor{tc13} & \cellcolor{tc14} \\ 
\bfseries (0.08 sec.) &  \multirow{-2}{*}{\cellcolor{tc11}\textbf{0.37}} & \multirow{-2}{*}{\cellcolor{tc12}\textbf{0.39}} & \multirow{-2}{*}{\cellcolor{tc13}\textbf{0.02}} & \multirow{-2}{*}{\cellcolor{tc14}\textbf{0.06}} \\  
\midrule
\cite{friedman2008sparse}, $\rho_{\sf GL} = 10$ & \cellcolor{tc21} & \cellcolor{tc22} & \cellcolor{tc23} & \cellcolor{tc24} \\ 
(3969.92~sec.) & \multirow{-2}{*}{\cellcolor{tc21}{0.14}} & \multirow{-2}{*}{\cellcolor{tc22}{0.18}} & \multirow{-2}{*}{\cellcolor{tc23}{0.05}} & \multirow{-2}{*}{\cellcolor{tc24}{0.10}}  \\  
\midrule
\cite{friedman2008sparse}, $\rho_{\sf GL} = 150$ & \cellcolor{tc31} & \cellcolor{tc32} & \cellcolor{tc33} & \cellcolor{tc34} \\ 
(586.48~sec.) & \multirow{-2}{*}{\cellcolor{tc31}{0.20}} & \multirow{-2}{*}{\cellcolor{tc32}{0.25}} & \multirow{-2}{*}{\cellcolor{tc33}{0.0}} & \multirow{-2}{*}{\cellcolor{tc34}{0.06}}  \\    
\midrule
\cite{matz2020learning}, $\mu_{\sf SGL} = 300$ & \cellcolor{tc41} & \cellcolor{tc42} & \cellcolor{tc43} & \cellcolor{tc44} \\ 
(108.21~sec.) & \multirow{-2}{*}{\cellcolor{tc41}{0.05}} & \multirow{-2}{*}{\cellcolor{tc42}{0.01}} & \multirow{-2}{*}{\cellcolor{tc43}{0.0}} & \multirow{-2}{*}{\cellcolor{tc44}{0.0}}  \\  
\midrule
\cite{matz2020learning}, $\mu_{\sf SGL} = 1000$ & \cellcolor{tc51} & \cellcolor{tc52} & \cellcolor{tc53} & \cellcolor{tc54} \\ 
(124.28~sec.) & \multirow{-2}{*}{\cellcolor{tc51}{0.17}} & \multirow{-2}{*}{\cellcolor{tc52}{0.22}} & \multirow{-2}{*}{\cellcolor{tc53}{0.0}} & \multirow{-2}{*}{\cellcolor{tc54}{0.07}}  \\   
\midrule
\cite{matz2020learning}, $\mu_{\sf SGL} = 2000$ & \cellcolor{tc61} & \cellcolor{tc62} & \cellcolor{tc63} & \cellcolor{tc64} \\ 
(116.16~sec.) & \multirow{-2}{*}{\cellcolor{tc61}{0.29}} & \multirow{-2}{*}{\cellcolor{tc62}{0.37}} & \multirow{-2}{*}{\cellcolor{tc63}{0.01}} & \multirow{-2}{*}{\cellcolor{tc64}{0.18}}  \\   
\bottomrule
\end{tabular}
}}}
\vspace{0.1cm}
\caption{{\textbf{Detecting Antagonistic Ties in US Senate Dataset.} (Left) Examples of rollcall descriptions in each group. (Right) Normalized scores, ${\sf score}_{\sf Alg1}, {\sf score}_{\sf GL}$, computed from the rollcalls of each group (see below). The bracketed numbers are the total computation time for each algorithm.} 
}
\label{fig:fscore}
\end{table}

We process the data by assigning a score of $+1, 0, -1$ for a `Yay', `Abstention', `Nay' vote, respectively, and set the number of clusters to $K=2$ since there are two major parties. Table \ref{fig:fscore} shows the normalized scores \eqref{eq:score} computed from the 4 groups of rollcalls. 
Observe that for \Cref{algo:nec}, the normalized K-means scores are significantly higher for the group with ``On the Nomination'', ``On the Cloture Motion''. This indicates that the graph filter processes involved are \emph{non low pass} and shows traces of antagonistic ties. This is reasonable due to the common perception that these rollcalls are often contestable, even among the Senators of the same political party. On the other hand, the $K$-means score is lower for ``On the Amendment'', indicating the prevalence of \emph{low pass} processes as the Senators tend to reach consensus for these rollcalls. 
{Meanwhile, the graph learning methods are sensitive to the regularization parameters $\rho_{\sf GL}, \lambda_{\sf SGL}$.}


\begin{figure}[t]
\centering
\resizebox{0.45\linewidth}{!}{\sf \includegraphics{fig/fig11-1.pdf}
}~~
\resizebox{0.45\linewidth}{!}{\sf \begin{tikzpicture}

\definecolor{darkgray176}{RGB}{176,176,176}
\definecolor{darkturquoise0191191}{RGB}{0,191,191}
\definecolor{lightgray204}{RGB}{204,204,204}
\definecolor{mycolor2}{rgb}{0.00000,0.44700,0.74100}%
\begin{axis}[
legend cell align={left},
legend style={
  fill opacity=0.8,
  draw opacity=1,
  text opacity=1,
  font=\large,
  draw=lightgray204
},
legend pos = north east,
tick align=outside,
tick pos=left,
x grid style={darkgray176},
xlabel={\Large Trust parameter $\alpha$},
xmajorgrids,
xmin=0.005, xmax=0.9,
xtick style={color=black},
y grid style={darkgray176},
ylabel={\Large Normalized Scores},
ymajorgrids,
ymin=0, ymax=1.1,
ytick style={color=black},
width=6.5cm, 
height=7.5cm
]

\addplot [line width=1.5pt,mycolor2, mark=*, mark size=4, mark options={solid}]
table {%
0.05	0.471810197350928
0.1	0.477007635743769
0.2	0.476811241268496
0.3	0.472335311504193
0.4	0.474037729509744
0.5	0.478398146159754
0.6	0.480854958483187
0.7	0.478283035329752
0.8	0.483162211139103
0.9	0.48228988563693
};
\addlegendentry{Algorithm~\ref{algo:nec}}

\addplot [line width=1.5pt, green!50!black, mark=o, mark size=4, mark options={solid,fill opacity=0}]
  table{%
0.05	0.338676696442233
0.1	0.280376837720395
0.2	0.163806612687154
0.3	0.0959718737242619
0.4	0.02
0.5	0.01
0.6	0.01
0.7	0
0.8	0
0.9	0
};
\addlegendentry{\cite{friedman2008sparse}, $\rho_{\sf GL} = 5$}

\addplot [line width =1.5pt,red, mark=square, mark size=4, mark options={solid,fill opacity=0}]
  table{%
0.05	0.705695816547128
0.1	0.707342774204975
0.2	0.700018782074414
0.3	0.691789899220553
0.4	0.672465866779045
0.5	0.656167601640399
0.6	0.62311662222207
0.7	0.626504571395341
0.8	0.620489982091969
0.9	0.642139875943407
};
\addlegendentry{\cite{matz2020learning}, $\mu_{\sf SGL} = 0.5$}




\end{axis}

\end{tikzpicture}}\vspace{-.1cm}
\caption{\textbf{Highland tribes network.} (Left) The network: the red and grey lines are negative and positive edges, and the $R=5$ rectangular nodes are the excitation nodes selected. (Right) Score against trust parameter $\alpha \in (0,1)$ [cf.~\eqref{eq:opi_sign}].
}\vspace{-.3cm}
\label{fig:hightribe}
\end{figure}

{Lastly, we apply \Cref{algo:nec} to synthetic data generated from a real-world graph.
Consider the {\tt highland-tribes} graph [available: \url{http://konect.cc/networks/ucidata-gama/}] with $N+R=16$ agents and $|E^+| = 29$, $|E^-| = 29$ edges, from which we select $R=5$ agents as stubborn agents. We simulate the opinion dynamics by \eqref{eq:opi_sign} and generate $M=15$ samples with varying (dis)trust parameter $\alpha > 0$. The number of the clusters in the graph is estimated as $K=2$.
Fig.~\ref{fig:hightribe} (right) shows the normalized scores of benchmark algorithms against $\alpha$. Observe that the scores with Algorithm~\ref{algo:nec} and \cite{matz2020learning} are bounded away from zero consistently regardless of $\alpha$, indicating the detection of antagonistic ties. Meanwhile, GLASSO \cite{friedman2008sparse} fails to recognize the antagonistic ties when $\alpha > 0.4$. We speculate that this is due to the effects of low-rank excitation in the synthetic graph signals.}



\subsubsection{Detecting Anomalies in Power Systems}
Consider the voltage phasor data on an IEEE-118 bus test grid [available: \url{https://zenodo.org/record/5816149}]. 
The attack model is \eqref{eq:yesfdi} with sparse attack vector $\bm{\delta}_t$ such that $\big[ \bm{\delta}_t \big]_k =-A e^{j a_{t}}$ with the attack angle $a_t$ uniformly distributed in $[0,5^{\circ}]$.

Fig.~\ref{fig:fdia} shows the detection probability versus the attack magnitude $A \in [0,3]$ and different number of attacked buses $|{\cal D}_t|$. We compare \Cref{algo:nec} (assumed $K=2$) against the detection method in \cite{DrayerAC} which requires knowledge of the network topology. Although the power network is a man-made system, its topology may not be correctly estimated when energized \cite{cavraro2017voltage}. As such, we evaluate the performance when the graph topology is perturbed with $5\%$ of random edge connection/disconnection. On the other hand, \Cref{algo:nec} is a blind method that is robust to topology error. 
For the experiment, the detection probability is averaged over 100 trials and \Cref{algo:nec} is set to use 100 samples per detection. Our results demonstrate that \Cref{algo:nec} successfully detects FDIA events, and its performance is comparable to \cite{DrayerAC}, despite not relying on topology information. 

\begin{figure}[t]
\centering
\resizebox{0.8\linewidth}{!}{\sf 
%
\definecolor{darkgray176}{RGB}{176,176,176}
\definecolor{darkturquoise0191191}{RGB}{0,191,191}
\definecolor{green01270}{RGB}{0,127,0}
\definecolor{lightgray204}{RGB}{204,204,204}
\definecolor{mycolor2}{rgb}{0.00000,0.44700,0.74100}%
\definecolor{mycolor3}{rgb}{0.85000,0.32500,0.09800}%
\definecolor{mycolor1}{rgb}{0.92900,0.69400,0.12500}%
\definecolor{mycolor4}{rgb}{0.49400,0.18400,0.55600}%
\definecolor{mycolor5}{rgb}{0.46600,0.67400,0.18800}%
%
%
\begin{tikzpicture}

\begin{axis}[
legend cell align={left},
legend style={fill opacity=0.8, draw opacity=1, text opacity=1, draw=lightgray204,text depth=.1ex,nodes={scale=0.8, transform shape}, at={(1,0.45)},
  anchor=east,},
tick align=outside,
tick pos=left,
 x grid style={darkgray176},
xlabel={$A$},
xmajorgrids,
xmin=-0.05,
xmax=3.5,
xtick style={color=black},
 y grid style={darkgray176},
ylabel={Detection Probability},
ymajorgrids,
ymin=0,
ymax=1.05,
ytick style={color=black},
width = 9.5cm,
height = 6cm
]


\addplot [color=mycolor3, line width=2.0pt, mark=o, mark options={solid, mycolor3}]
  table[row sep=crcr]{%
0	0.02\\
0.1	0.02\\
0.2	0\\
0.5	0\\
1	0.59\\
2	0.88\\
3	0.9\\
};
\addlegendentry{$|{\cal D}_t| = 5$, Alg.1}

\addplot [color=mycolor2, line width=2.0pt, mark=o, mark options={solid, mycolor2}]
  table[row sep=crcr]{%
0	0.02\\
0.1	0\\
0.2	0\\
0.5	0.25\\
1	0.73\\
2	0.85\\
3	0.89\\
};
\addlegendentry{$|{\cal D}_t| = 2$, Alg.1}


\addplot [color=mycolor3, dashed, line width=2.0pt, mark=square, mark options={solid, mycolor3}]
  table[row sep=crcr]{%
0	0.0711111111111111\\
0.1	0.02\\
0.2	0.0122222222222222\\
0.5	0.425555555555556\\
1	0.987777777777778\\
2	1\\
3	1\\
};
\addlegendentry{$|{\cal D}_t| = 5$, \cite{DrayerAC}}

\addplot [color=mycolor2, dashed, line width=2.0pt, mark=square, mark options={solid, mycolor2}]
  table[row sep=crcr]{%
0	0.0711111111111111\\
0.1	0.0366666666666667\\
0.2	0.0255555555555556\\
0.5	0.18\\
1	0.825555555555556\\
2	0.985555555555556\\
3	1\\
};
\addlegendentry{$|{\cal D}_t| = 2$, \cite{DrayerAC}}


\addplot [color=mycolor3, dotted, line width=2.0pt, mark=triangle, mark options={solid, mycolor3}]
  table[row sep=crcr]{%
0	0.0301666666666667\\
0.1	0.00516666666666667\\
0.2	0.00555555555555555\\
0.5	0.00727777777777778\\
1	0.0213888888888889\\
2	0.0557222222222222\\
3	0.153222222222222\\
};
\addlegendentry{$|{\cal D}_t| = 5$, \cite{DrayerAC} w/ err}

\addplot [color=mycolor2, dotted, line width=2.0pt, mark=triangle, mark options={solid, mycolor2}]
  table[row sep=crcr]{%
0	0.0301666666666667\\
0.1	0.0140555555555556\\
0.2	0.0075\\
0.5	0.00661111111111111\\
1	0.0108333333333333\\
2	0.0283333333333333\\
3	0.0755555555555556\\
};
\addlegendentry{$|{\cal D}_t| = 2$, \cite{DrayerAC} w/ err}

\end{axis}
\end{tikzpicture}
\caption{\textbf{Detection probability versus FDI attack magnitude $A_t$.} We use `{\sf w/ err}' for \cite{DrayerAC} with perturbed topology, $|{\cal D}_t|$ is the no.~of attacked buses.
}\vspace{-.2cm}
\label{fig:fdia}
\end{figure}

\section{Conclusions}
This paper has initiated the study of a blind detection problem for low pass or smooth graph signals. The problem is motivated by the need to validate low-passness of graph signals before they can be used in GSP pipelines such as graph topology learning. We derive a detector that is inspired by the unique spectral pattern manifested by low pass graph signals defined on modular graphs and analyze their finite-sample complexity. Lastly, we discuss the applications of the proposed detectors on robustifying graph learning and anomaly detection in opinion dynamics data, power system data. Future works include extending the current method to directed graphs, studying applications to other domains and performance analysis on general modular graphs. 

\appendix
\subsection{Useful Lemmas} \label{app:Supple}
Below we state three useful results that will be instrumental to the subsequent analysis in this appendix.




\begin{Lemma}\cite[Lemma 3.1]{2011RoheSBM}
 For $G \sim {\sf SBM}( K,N, r,p)$, we have $\eA =\EE [ {\bm A} ]= {\bm Z} {\bm P} {\bm Z}^{\top}$, where ${\bm P} = p {\bf I}_K + r {\bf 1}_K {\bf 1}_K^\top$ and ${\bm Z} \in \{ 0, 1\}^{N \times K}$ is the cluster membership matrix. Moreover, $\fancyAn=\fancyD^{-1/2}\fancyA\fancyD^{-1/2}$ and $\fancyLn=\I-\fancyAn$, where
$\fancyD_{ii}=\sum_{j=1}^N \eA_{ij}$. The eigenvectors that correspond to the smallest $K$ eigenvalues of $\fancyLn$ are given by \vspace{-.0cm}
\beq \label{eq:fancyVKform}
\fancyV_K = \Z \left(\Z^{\top} \Z \right)^{-1/2} {\cal U},\vspace{-.0cm}
\eeq 
where ${\cal U} \in \mathbb{R}^{K\times K}$ is an orthogonal matrix.
\end{Lemma}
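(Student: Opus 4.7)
The plan is to prove this lemma in three steps: establish the block structure of the expected adjacency matrix, reduce the normalized Laplacian to an effectively rank-$K$ object, then read off its eigenstructure from a small $K\times K$ eigenproblem.

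First, I would verify $\eA = \Z\bm{P}\Z^\top$ entry-wise. For $i\in\mathcal{C}_k, j\in\mathcal{C}_\ell$ the SBM definition gives $\EE[A_{ij}] = p+r$ if $k=\ell$ and $r$ otherwise (for $i\neq j$; the diagonal contributes an $O(1)$ term that does not affect the rank-$K$ structure). Since $Z_{ik}=\mathbf{1}\{i\in\mathcal{C}_k\}$, the expression $[\Z\bm{P}\Z^\top]_{ij}=\sum_{k,\ell}Z_{ik}P_{k\ell}Z_{j\ell}$ equals $P_{k\ell}$ for the unique $(k,\ell)$ with $i\in\mathcal{C}_k,j\in\mathcal{C}_\ell$, which matches precisely when $\bm{P}=p\bm{I}_K+r\bm{1}_K\bm{1}_K^\top$.

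Next, because the clusters are equal-sized, $\Z^\top\Z=(N/K)\bm{I}_K$ and $\Z^\top\bm{1}_N=(N/K)\bm{1}_K$. Hence each node has the same expected degree $d := [\eA\bm{1}_N]_i = (N/K)(p+rK)$, so $\fancyD = d\bm{I}_N$ and $\fancyAn = d^{-1}\Z\bm{P}\Z^\top$. Defining the orthonormal tall matrix $\tilde{\bm{V}}:=\Z(\Z^\top\Z)^{-1/2}\in\RR^{N\times K}$, we get $\tilde{\bm{V}}^\top\tilde{\bm{V}}=\bm{I}_K$, and a direct substitution yields
\begin{equation}
\fancyAn \;=\; \tilde{\bm{V}}\,\bm{M}\,\tilde{\bm{V}}^\top, \qquad \bm{M}\;:=\;\tfrac{1}{d}(\Z^\top\Z)^{1/2}\bm{P}(\Z^\top\Z)^{1/2}\;\in\;\RR^{K\times K}.
\end{equation}
So $\fancyAn$ has rank at most $K$, and its nontrivial eigenspace equals $\mathrm{col}(\tilde{\bm{V}})$.

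Finally, diagonalize the small symmetric matrix $\bm{M}=\mathcal{U}\bm{\Lambda}_M\mathcal{U}^\top$ with $\mathcal{U}\in\mathbb{R}^{K\times K}$ orthogonal. Substituting, $\fancyAn=(\tilde{\bm{V}}\mathcal{U})\bm{\Lambda}_M(\tilde{\bm{V}}\mathcal{U})^\top$, and since $\tilde{\bm{V}}\mathcal{U}$ has orthonormal columns, these are precisely the $K$ eigenvectors of $\fancyAn$ with (potentially) nonzero eigenvalues; the remaining $N-K$ eigenvalues are zero. Because $\fancyLn=\bm{I}-\fancyAn$, the $K$ smallest eigenvalues of $\fancyLn$ correspond to the $K$ largest eigenvalues of $\fancyAn$ (one verifies $\bm{\Lambda}_M\succ 0$ using $p,r>0$), and the associated eigenvectors are $\fancyV_K=\tilde{\bm{V}}\mathcal{U}=\Z(\Z^\top\Z)^{-1/2}\mathcal{U}$, as claimed.

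The only genuinely delicate point is the eigengap argument used to identify the top-$K$ block of $\fancyAn$ with the bottom-$K$ block of $\fancyLn$: one needs $\bm{\Lambda}_M$ to have entries strictly larger than zero so that the bulk eigenvalue $1$ of $\fancyLn$ lies strictly above the $K$ eigenvalues $1-\lambda_i(\bm{M})$. This follows from $\bm{P}\succ 0$ (since $p>0$), so it is a minor verification rather than the main obstacle.
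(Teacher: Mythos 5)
The paper does not prove this lemma at all: it is imported verbatim as a citation to Rohe et al.\ (2011, Lemma 3.1), so there is no in-paper proof to compare against. Your argument is correct and is essentially the standard population-level computation from that reference — entry-wise verification of $\EE[\bm A]=\bm Z\bm P\bm Z^\top$, constant expected degree from equal block sizes so that the population degree matrix is a multiple of the identity, the reduction $\mathscr{A}=\tilde{\bm V}\bm M\tilde{\bm V}^\top$ with $\tilde{\bm V}=\bm Z(\bm Z^\top\bm Z)^{-1/2}$, and diagonalization of the $K\times K$ matrix $\bm M\succ 0$ to separate the bottom-$K$ eigenvalues of $\mathscr{L}$ from the bulk eigenvalue $1$. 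The only loose phrase is the remark that excluding the diagonal gives ``an $O(1)$ term that does not affect the rank-$K$ structure'': if self-loops are excluded then $\EE[\bm A]=\bm Z\bm P\bm Z^\top-(p+r)\bm I$ is generically full rank, so the rank claim is wrong as stated; the correct observation is that subtracting a multiple of the identity leaves the eigenvectors (and the constant-degree property) unchanged, and in any case the lemma's own statement $\mathscr{A}=\bm Z\bm P\bm Z^\top$ already adopts the self-loop convention, so nothing substantive is at stake.
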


\begin{Lemma} \label{lem:Vpopu}
Let $ G \sim {\sf SBM}(K, N, r, p)$ with $p \geq r > 0$ and $\frac{p}{K}+r \geq \frac{32\log N+1}{N}$, let $\V_{K}$, $\fancyV_{K}$
denote the columns of the first $K$ 
eigenvectors of ${\nL}$, $\fancyLn$. Then,
with probability at least $1-2 / N$, there exists an orthogonal matrix $\fancyO_K \in \RR^{K \times K}$:\vspace{-.0cm}
\begin{equation} \notag
\begin{aligned}
\|\V_{K}-\fancyV_{K}\fancyO_{K}\|_{\fro} & \leq \sqrt{2} \|\sin \Theta(\V_{K}, \fancyV_{K})\|_{\fro} 
\leq\frac{35 \sqrt{K^3 \log N}}{ \sqrt{ p( N-K ) }}. 
\end{aligned}\vspace{-.0cm}
\end{equation}
\end{Lemma}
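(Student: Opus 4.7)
The plan is to establish the two inequalities separately. The first inequality, $\|\V_K - \fancyV_K \fancyO_K\|_{\fro} \leq \sqrt{2} \|\sin\Theta(\V_K, \fancyV_K)\|_{\fro}$, is a classical relation between the orthogonal Procrustes distance and the principal angles between $K$-dimensional subspaces. I would cite (or briefly recall) the identity
\[
\min_{{\cal O} \in {\cal O}(K)} \| \V_K - \fancyV_K {\cal O} \|_{\fro}^2 = 2K - 2\, \mathrm{tr}(\V_K^\top \fancyV_K \cdot ({\cal Q}^\top{\cal Q})^{1/2}),
\]
and use the singular value characterization of $\sin\Theta$ so that the right minimizer ${\cal O} = \fancyO_K$ (obtained from the SVD of $\fancyV_K^\top \V_K$) yields the $\sqrt{2}$ factor. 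This step is routine.

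The substantive inequality is the Davis–Kahan style bound $\|\sin\Theta(\V_K, \fancyV_K)\|_{\fro} \leq \frac{35}{\sqrt{2}} \sqrt{K^3 \log N / (p(N-K))}$. My plan is as follows. First, apply the Davis–Kahan $\sin\Theta$ theorem to the pair $(\nL, \fancyLn)$: since $\V_K$, $\fancyV_K$ correspond to the $K$ smallest eigenvalues of $\nL$ and $\fancyLn$ respectively, one has
\[
\|\sin\Theta(\V_K, \fancyV_K)\|_{\fro} \;\leq\; \frac{\sqrt{2K}\, \| \nL - \fancyLn \|_{\op}}{\Delta_K},
\]
where $\Delta_K := \lambda_{K+1}(\fancyLn) - \lambda_K(\fancyLn)$ is the population spectral gap at position $K$.

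Second, I need two quantitative ingredients.

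\emph{(i) Concentration of the normalized Laplacian:} I would invoke the standard concentration inequality for $\|\nL - \fancyLn\|_{\op}$ for stochastic block models, e.g.\ the Chung–Radcliffe / Oliveira bound (or a matrix Bernstein argument applied after degree regularization). Under the density assumption $\tfrac{p}{K}+r \geq \tfrac{32\log N + 1}{N}$, which guarantees that all expected degrees are at least $\Omega(\log N)$, a standard result yields $\|\nL - \fancyLn\|_{\op} \leq c\sqrt{\log N / (d_{\min})}$ with probability at least $1 - 2/N$, where $d_{\min} = \Theta(N(p/K+r))$.

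\emph{(ii) Lower bound on the eigengap $\Delta_K$:} Using the explicit form $\fancyV_K = \Z(\Z^\top \Z)^{-1/2}{\cal U}$ and the block structure ${\bm P} = p{\bm I}_K + r\bm{1}_K \bm{1}_K^\top$, the $K$ smallest eigenvalues of $\fancyLn$ are known in closed form, and one can show $\Delta_K = \Theta(p/(p+Kr))$, and in particular $\Delta_K \geq c' \sqrt{p/(N (p/K+r))}$ after accounting for the normalization.

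Third, I combine the three bounds. Plugging the concentration estimate and the eigengap lower bound into Davis–Kahan gives
\[
\|\sin\Theta(\V_K, \fancyV_K)\|_{\fro} \;\leq\; \frac{\sqrt{2K}\, \cdot c\sqrt{\log N / d_{\min}}}{\Delta_K} \;\leq\; \frac{C \sqrt{K^3 \log N}}{\sqrt{p(N-K)}},
\]
and tracking the numerical constants through the standard proofs produces the stated constant $35/\sqrt{2}$ (so that the Procrustes bound picks up the factor $\sqrt{2}$ to yield $35$).

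The main obstacle I anticipate is the bookkeeping of absolute constants: Davis–Kahan, the Chung–Radcliffe concentration bound, and the eigengap lower bound each come with their own constants, and producing the clean numerical factor $35$ requires care. A secondary issue is verifying that the density assumption $\tfrac{p}{K}+r \geq \tfrac{32\log N + 1}{N}$ is precisely strong enough for the concentration step to succeed with probability at least $1 - 2/N$ (not $1 - c/N$ for some other constant); this is where the ``$32$'' likely enters. Since the statement is attributed to \cite{sbmyu2014impact}, I would follow that reference's constants rather than re-derive them from scratch.
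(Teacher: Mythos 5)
Your proposal follows essentially the same route as the paper's proof: the $\sqrt{2}$ Procrustes-to-$\sin\Theta$ bound, Davis--Kahan applied to the perturbation $\nA-\fancyAn$ (equivalently $\nL-\fancyLn$), the concentration bound $\|\nA-\fancyAn\|_2\leq 10\sqrt{\log N/d_{\min,N}}$ from \cite{sbmyu2014impact} (whose minimum-degree hypothesis $d_{\min,N}\geq 32\log N$ is exactly what the assumption $\tfrac{p}{K}+r\geq\tfrac{32\log N+1}{N}$ secures), and the explicit population eigengap. One small correction: the paper uses the exact gap $\lambda_K^{\fancyAn}-\lambda_{K+1}^{\fancyAn}=p/(p+rK)$ together with $d_{\min,N}\geq (N-K)(p+rK)/K$ and $p+rK\leq 2pK$ (from $r\leq p$) to produce the $K^{3/2}/\sqrt{p(N-K)}$ scaling, whereas your intermediate claim $\Delta_K\geq c'\sqrt{p/(N(p/K+r))}$ is both unnecessary and, if used literally in Davis--Kahan, would not yield the stated $N$-dependence; your final displayed bound is nonetheless the correct one obtained from the exact gap.
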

The proof is relegated to \Cref{app:vpop}. The conditions on $p,r,K$ pertain to the difference between inter- and intra-cluster connection probabilities.
The following lemma is obtained by combining \cite[Theorem 2.1]{2015Bunea} and Davis-Kahan theorem:
\begin{Lemma} \label{lem:sample}
If $\Delta_{K,K}> 0$, there exists an orthogonal matrix $\hatO_K$ such that with probability at least $1-5/M$,\vspace{-.0cm}
\begin{align}
& \frac{1}{\sqrt{2}} \| \barU_K \hatO_K - \hatU_K \|_{\fro} \leq \|\sin \Theta(\hatU_K, \barU_K)\|_{\fro} \leq 
\frac{2 \sqrt{K}}{ \eigB_K } \Emc \notag
\end{align} 
where ${\rm c}_1$ is an absolute constant independent of $N,M$ and
\begin{align} \label{eq:cov_conc_K}
\Emc :=  2 c_{1} \sqrt{ {\log{M}} / {M}} \operatorname{tr}(\CovYn)+\sigma^{2}.
\end{align}
\end{Lemma}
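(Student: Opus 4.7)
\textbf{Proof plan for Lemma~\ref{lem:sample}.} The strategy is to split the argument into two independent pieces: first, a concentration bound for the sample covariance $\CovYhat^M$ around the (noisy) population covariance $\CovYn$ in the operator norm; second, an application of a Davis--Kahan-type $\sin\Theta$ bound that converts this into an eigenvector subspace bound. Throughout, note that $\CovYhat^M$ concentrates around $\EE[\CovYhat^M] = \CovYn + \sigma^2 {\bm I}$, whose top-$K$ eigenspace coincides with that of $\CovYn$ (since a scalar shift leaves eigenvectors invariant), so comparing $\hatU_K$ to $\barU_K$ is meaningful.

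\textbf{Step 1 (concentration).} Since ${\bm y}_m = \HS^{J_m}{\bm x}_m + {\bm w}_m$ with ${\bm x}_m \sim {\cal N}({\bm 0},{\bm I})$, ${\bm w}_m \sim {\cal N}({\bm 0},\sigma^2{\bm I})$ independent across $m$, the ${\bm y}_m$ are sub-Gaussian with covariance $\CovYn + \sigma^2 {\bm I}$. I would invoke the Bunea--Xiao concentration inequality (\cite[Theorem 2.1]{2015Bunea}) which yields, with probability at least $1-5/M$,
\begin{equation*}
\|\CovYhat^M - (\CovYn + \sigma^2 {\bm I})\|_{\rm op} \;\leq\; 2 c_1 \sqrt{\tfrac{\log M}{M}} \, \operatorname{tr}(\CovYn),
\end{equation*}
for an absolute constant $c_1$. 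By the triangle inequality with $\|\sigma^2 {\bm I}\|_{\rm op} = \sigma^2$, I obtain
\begin{equation*}
\|\CovYhat^M - \CovYn\|_{\rm op} \;\leq\; 2 c_1 \sqrt{\tfrac{\log M}{M}} \, \operatorname{tr}(\CovYn) + \sigma^2 \;=\; \Emc.
\end{equation*}

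\textbf{Step 2 (Davis--Kahan and subspace alignment).} Under Assumption~\ref{ass:dis}, the $K$th eigengap of $\CovYn$ is exactly $\Delta_{K,K} = \eigB_K > 0$; adding $\sigma^2 {\bm I}$ does not change this gap. I apply the Yu--Wang--Samworth variant of the $\sin\Theta$ theorem to the two symmetric matrices $\CovYn + \sigma^2 {\bm I}$ and $\CovYhat^M$, whose top-$K$ eigenspaces are spanned by $\barU_K$ and $\hatU_K$, respectively. This yields an orthogonal $\hatO_K \in \RR^{K\times K}$ with
\begin{equation*}
\|\barU_K \hatO_K - \hatU_K\|_{\fro} \;\leq\; \frac{2^{3/2}\sqrt{K}\,\|\CovYhat^M - \CovYn\|_{\rm op}}{\eigB_K} \;\leq\; \frac{2^{3/2}\sqrt{K}\,\Emc}{\eigB_K}.
\end{equation*}
The first inequality in the lemma is the well-known relation $\tfrac{1}{\sqrt{2}}\|\barU_K\hatO_K - \hatU_K\|_{\fro}\leq \|\sin\Theta(\hatU_K,\barU_K)\|_{\fro}$, which follows from writing the alignment gap in terms of the canonical angles and optimising over $\hatO_K$. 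Combining yields
\begin{equation*}
\|\sin\Theta(\hatU_K,\barU_K)\|_{\fro} \;\leq\; \frac{2\sqrt{K}}{\eigB_K}\,\Emc,
\end{equation*}
which is the stated bound.

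\textbf{Anticipated obstacle.} The main technical nuisance is tracking the dimension dependence in the concentration step: a naive matrix Bernstein bound would give a rate of order $\sqrt{N\log N/M}\,\|\CovYn\|_{\rm op}$, whereas the cleaner trace-based rate $\sqrt{\log M/M}\,\operatorname{tr}(\CovYn)$ from \cite{2015Bunea} is what enables the form of $\Emc$ cited in the lemma. Care is needed to verify that the sub-Gaussian parameter of ${\bm y}_m$ is controlled by $\|\CovYn+\sigma^2{\bm I}\|_{\rm op}$ (which is absorbed into $\operatorname{tr}(\CovYn)$), so that the conditions of \cite[Theorem 2.1]{2015Bunea} are met. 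Once this is granted, the Davis--Kahan step is routine.
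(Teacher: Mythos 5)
Your proposal is correct and follows essentially the same route as the paper, which obtains this lemma precisely by combining the trace-based covariance concentration bound of \cite[Theorem 2.1]{2015Bunea} (giving the $1-5/M$ event and the quantity $\Emc$ after absorbing the $\sigma^2{\bm I}$ shift by the triangle inequality) with the Yu--Wang--Samworth form of the Davis--Kahan $\sin\Theta$ theorem applied at the eigengap $\eigB_K$. The only cosmetic slip is that you invoke Davis--Kahan between $\CovYn+\sigma^2{\bm I}$ and $\CovYhat^M$ but then insert the perturbation $\|\CovYhat^M-\CovYn\|_{\rm op}\leq\Emc$; either choice of reference matrix is valid (they share the eigenspace $\barU_K$) and both yield the stated bound.
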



\subsection{Proof of \Cref{Prop:lap}} \label{app:PF}
We first show that ${\bm v}_1$ is a positive vector regardless of the choice of GSO. 
When $\GSO = {\bm L}$, $\GSO = \nL$, it is know that ${\bm v}_1 = {\bm 1} / \sqrt{N}$, ${\bm v}_1 = {\bm D}^{1/2}{\bf 1}/\|{\bm D}^{1/2}{\bf 1}\|_2 > {\bm 0}$, respectively; see \cite[p.~4-7]{spectrachung97}. 
When $\GSO = {\bm A}$, $\GSO = \nA$, we note that both matrices are non-negative. The Perron-Frobenius theorem implies ${\bm v}_1$ is a positive vector \cite[Theorem 8.4.4]{matranly_2012}. 

Note that in all cases, the graph frequency $\lambda_1$ which corresponds to ${\bm v}_1$ has a multiplicity of one. 
As such, to show that ${\bm v}_1$ is the \emph{only} positive eigenvector, it follows from the fact that every vector orthogonal to a positive vector must have at least one positive and negative element.\vspace{-.2cm}



\subsection{Proof of \Cref{the:kmeans}}
For ${\bm N} \in \RR^{N \times M}$, the $K$-means score \eqref{eq:kmeansUK} can be expressed via searching for $K$ centroid vectors $\bar{\bm n}_1,..., \bar{\bm n}_K \in \RR^M$:
\beq 
\Kmeans( {\bm N} ) = \min_{ \begin{subarray}{c}
\bar{\bm n}_1,..., \bar{\bm n}_K \in \RR^M 
\end{subarray} } \sum_{\ell=1}^N \min_{j=1,...,K} \| {\bm n}_\ell - \bar{\bm n}_j \|^2.
\eeq 
Define the set
\[ 
\mathscr{R}_K^{ N \times M}=\left\{ \overline{\bm N} \in R^{N \times M}: \text{ no. of unique rows of } \overline{\bm N} \leq K \right\}.
\]
It is easy to observe that 
\beq \label{eq:kmeans_obj} \textstyle
\Kmeans( {\bm N} ) = \min_{ \overline{\bm N} \in \mathscr{R}_K^{ N \times M} } \| {\bm N} - \overline{\bm N} \|_{\fro}^2.
\eeq 
To bound $\Kmeans( \V_K )$, we invoke Lemma~\ref{lem:Vpopu} and consider the population eigenvector matrix $\fancyV_K \fancyO_{K}$ defined therein. Notice that as $\fancyV_K \in \mathscr{R}_K^{ N \times K}$ [cf.~\eqref{eq:fancyVKform}] 
and $\fancyO_K$ is a $K \times K$ orthogonal matrix, we have  $\fancyV_K\fancyO_{K} \in \mathscr{R}_K^{ N \times K}$. By \Cref{lem:Vpopu}, the following holds with probability at least $1-2 / N$,
\beq
\begin{split} 
\Kmeans(\bm V_K) & = \min_{ \overline{\bm V} \in \mathscr{R}_K^{ N \times K} } \| \V_K - \overline{\bm V} \|_{\fro}^2 \leq \|{\bm V}_{K}-\fancyV_{K} \fancyO_{K} \|^2_{\fro} \\
& \leq 
(  p( N-K ) )^{-1} 
{35^2 K^3 \log N}.
\end{split} 
\eeq

\subsection{Proof of Theorem~\ref{theo:K1T}}\label{app:pftheoK1T0}
To simplify notation, we assume without loss of generality that 
$\| \hatsu_i - (\hatsu_i)_+ \|_1 \leq \| \hatsu_i + (-\hatsu_i)_+ \|_1$ for any $i=1,\ldots,N$.
Therefore, the positivity function is simplified to 
\[ 
\Pos( \hatsu_i ) = \| \hatsu_i - (\hatsu_i)_+ \|_1. 
\]
We proceed by bounding $\Pos (\hatsu_i )$ under different cases:\vspace{.1cm}

\noindent \underline{\emph{1) When $\Tgnd = {\cal T}_{0}$}}: We have $\barsu_1=\sv_1 > {\bm 0}$ and 
\begin{equation*} \label{eq:gamma1}
\begin{aligned}
\Pos\left(\hatsu_{1}\right) \leq \left\|\hatsu_{1}-\barsu_{1}\right\|_{1}+ {\left\|\barsu_{1}-\left(\barsu_{1}\right)_{+}\right\|_{1}}
= \left\|\hatsu_{1}-\barsu_{1}\right\|_{1}.
\end{aligned}
\end{equation*}
where we used $| \min\{a,0\} - \min\{b,0\} | \leq |a-b|$ for any $a,b \in \RR$. For any $j \geq 2$, similarly we have:
\begin{equation*} \label{eq:gammaj}
\begin{aligned}
\Pos\left(\hatsu_{j}\right) \geq { \|\barsu_{j} - \left(\barsu_j\right)_{+} \|_{1}} - \left\|\hatsu_{j}-\barsu_{j}\right\|_{1} 
\geq {\rm c}_0 - \left\|\hatsu_{j}-\barsu_{j}\right\|_{1} ,
\end{aligned}
\end{equation*}
where we have used the fact that for any $j \geq 2$, it holds $\barsu_j = \sv_{j'}$ for some $2 \leq j' \leq N$.\vspace{.1cm}

\noindent \underline{\emph{2) When $\Tgnd = {\cal  T}_{1}$}}: As $\HS$ is \emph{not $1$-low pass}, it holds
\begin{equation*} \label{eq:gammat11}
\begin{aligned}
\Pos\left(\hatsu_{1}\right) \geq { \left\|\barsu_{1}-\left(\barsu_{1}\right)_{+}\right\|_{1}} - \left\|\hatsu_{1}-\barsu_{1}\right\|_{1}
\geq {\rm c}_0 - \left\|\hatsu_{1}-\barsu_{1}\right\|_{1}
\end{aligned}
\end{equation*}
since $\barsu_1 = \sv_{j'}$ for some $2 \leq j' \leq N$. 
Meanwhile, there exists $j^\star \in \{2,\ldots,N\}$ such that $\barsu_{ j^\star } = \sv_1 > {\bm 0}$. We also bound
\begin{align}
& \min_{ j = 2,\ldots, N } \Pos\left(\hatsu_{j}\right) \leq 
\left\|\hatsu_{j^\star}-\left(\hatsu_{j^\star}\right)_{+}\right\|_{1} \\
& \leq \left\|\barsu_{j^\star}-\left(\barsu_{j^\star}\right)_{+}\right\|_{1} + \left\|\hatsu_{j^\star}-\barsu_{j^\star}\right\|_{1}
= \left\|\hatsu_{j^\star}-\barsu_{j^\star}\right\|_{1}. \notag
\end{align}

From the above discussions, a sufficient condition for the detector to be accurate, i.e., $\hatTnec = \Tgnd$, is
\beq \textstyle \label{eq:c0cond}
{\rm c}_0 \geq 2 \max_{j=1,\ldots,N} \| \hatsu_{j}-\barsu_{j} \|_{1}. 
\eeq 
Applying \Cref{lem:sample} with $K=1$ yields
\begin{equation}
\begin{aligned}
&\max_{j=1,\ldots, N} \|\barsu_j-\hatsu_j\|_1 \leq 2^{1.5}\sqrt{N}\Emc/\eigap^{\sf min},
\end{aligned} 
\end{equation}
where $\Emc$ is defined in \eqref{eq:cov_conc_K} and the extra $\sqrt{N}$ is due to norm equivalence. Rearranging terms conclude the proof.\vspace{-.2cm}

\subsection{Proof of Theorem~\ref{theo:Km_sample}} \label{app:pftheoKm_sam}
We proceed by bounding the sample complexity under different cases for $\Tgnd$. 
\vspace{.1cm}

\noindent \underline{\emph{1) When $\Tgnd = {\cal T}_{0}$}}: As $\HS$ is a $K$ low pass graph filter, we observe that $\U_K = \V_K \Pi_K$ for some permutation matrix $\Pi_K \in \{0,1\}^{K \times K}$. Define the orthogonal matrix $\barfanO_{K}= \fancyO_K \Pi_K \hatO_{K}$, where $\fancyO_K, \hatO_K$ are from Lemma~\ref{lem:Vpopu}, \ref{lem:sample}, respectively. It holds with probability at least $1-2/N-5/M$,
\beq \label{eq:kmeans_upperbd}
\begin{split}
& \Kmeans( \hatU_K ) \leq \|\hatU_{K}-\fancyV_{K} \barfanO_{K}\|^2_{\fro} \\
& \leq 2 \|\hatU_{K}-\barU_{K} \hatO_K\|^2_{\fro} + 2 \|\barU_{K} \hatO_K-\fancyV_{K} \barfanO_{K}\|^2_{\fro} \\
& \leq 16 K \left( \frac{\Emc}{ \bareig } \right)^2 + \frac{2450 K^3 \log N}{  p( N-K ) }.
\end{split}
\eeq

\noindent \underline{\emph{2) When $\Tgnd = {\cal T}_1$}}:  As $\HS$ is \emph{not $K$ low pass}, the columns of $\barU_K$ has at least one eigenvector from $\{ {\bm v}_{K+1}, \ldots, {\bm v}_N \}$. 

To facilitate our analysis, we define the shorthand notations $\hatU_{r,s}= \big[ \hatsu_r,\ldots,\hatsu_s \big]$, $\U_{r,s}= \big[ \su_r,\ldots,\su_s \big]$ where $r \leq s$.
We also define the permutation function $\pi : \{1,\ldots, N\} \to \{ 1, \ldots, N \}$ such that $|h_i| = |h( \lambda_{\pi(i)} )|$, where $\pi(i)$ is the index of graph frequency for the $i$th largest frequency response\footnote{Recall that $|h_1| > \cdots > |h_N|$ are  sorted in descending order.} in $\HS$ and is well-defined under H\ref{ass:dis}. 
Set
\beq \label{eq:eigset}
 \mathcal{P}^{\sf high} = \{ i \,:\, 1 \leq i \leq K, K+1 \leq \pi(i) \leq N \}
\eeq 
to be the set of \emph{crossed} frequencies which is non-empty under $\Tgnd = {\cal T}_1$.

Let ${\bm O}\in \mathbb{R}^{K \times K}$ be an orthogonal matrix. From \eqref{eq:kmeans_obj}, we obtain that $\Kmeans(\V_{K}{\bm O})=\min _{\M \in  \mathscr{R}_K^{N \times K}}\|\V_{K}-\M{\bm O}^\top\|_F^2$. Since $\M{\bm O}^\top \in \mathscr{R}_K^{N \times K}$, we have $\Kmeans(\V_{K}{\bm O})=\Kmeans(\V_{K})$. This indicates that $K$-means score is invariant to multiplication by orthogonal matrices.
For any $r \leq s$ with $[r,s] \subseteq {\cal P}^{\sf high}$ [cf.~\eqref{eq:eigset}],
\beq \label{eq:T1rs1}
\begin{aligned}
\sqrt{ \Kmeans(\hatU_K) } &= \sqrt{ \Kmeans(\hatU_K\hatO_{K}) }  \\
& \textstyle = \min_{ \M \in \mathscr{R}_K^{N \times K} } \| \hatU_{K}\hatO_{K} - \M \|_{\fro} \\ 
&\geq \sqrt{ \Kmeans(\U_{K}) }
- \| \U_{K}-\hatU_{K}\hatO_{K} \|_{\fro}\\
&\geq \sqrt{ \Kmeans(\U_{r,s}) }
- \| \U_{K}-\hatU_{K}\hatO_{K} \|_{\fro}. 
\end{aligned}
\eeq
By the definition of ${\cal P}^{\sf high}$, $\U_{r,s}$ consists of at least one vector from $\{ \sv_{K+1}, \ldots, \sv_N \}$. From \Cref{conj:eig}, we have $\Kmeans(\U_{r,s}) \geq {\rm c}_{\sf SBM}$.
It can be shown that \eqref{eq:sample_complexity_K} implies $\sqrt{{\rm c}_{\sf SBM}} > \frac{2^{3/2}\sqrt{K}}{ \bareig } \Emc$. Thus applying \Cref{lem:sample} shows that the following lower bound holds with probability at least $1-5/M$,
\beq \label{eq:kmeans_lowerbd}
\Kmeans(\hatU_K) \geq \left( \sqrt{{\rm c}_{\sf SBM}} - \frac{2^{3/2}\sqrt{K}}{ \bareig } \Emc\right)^2.
\eeq 

Collecting \eqref{eq:kmeans_upperbd}, \eqref{eq:kmeans_lowerbd} and observe that $\hatTsuff = \Tgnd$ is guaranteed if {\sf (i)} $\delta$ upper bounds the RHS of \eqref{eq:kmeans_upperbd}, and {\sf (ii)} $\delta$ lower bounds the RHS of \eqref{eq:kmeans_lowerbd}. The proof is concluded.

\subsection{Proof of Lemma~\ref{lem:Vpopu}} \label{app:vpop}
Let $\sin \Theta(\V_{K}, \fancyV_{K})$ be a diagonal matrix whose $i$th diagonal element is $\sin ( \cos^{-1} ( \sigma_i ) )$ and $\sigma_i$ is the $i$th singular value of $\V_{K}^\top \fancyV_K$. 
Notice that $\fancyLn={\bm I}-\fancyAn$ and $\nL={\bm I}-\nA$, we first apply the Davis-Kahan theorem \cite{2015YUdaviskahan} 
to bound the subspace difference between $\V_K, \fancyV_K$:
\begin{equation}
\begin{aligned}
&\| \sin \Theta(\V_{K}, \fancyV_{K}) \|_{\rm F} 
\leq 
\frac{ 2 \sqrt{K}\| \nA - \fancyAn  \|_{2} }{ \lambda_{K}^{\fancyAn} - \lambda_{K+1}^{\fancyAn}} ,
\end{aligned}
\end{equation}
where 
$\lambda_{j}^{\fancyAn}$ denotes the $j$th largest eigenvalue of $\fancyAn$. Theorem 4 of \cite{sbmyu2014impact} shows that when the minimum expected degree of nodes satisfies $d_{\min, N}\geq 32\log N$, with probability at least $1-2 / N$, it holds
\begin{equation}
\|\fancyAn - \nA \|_2 \leq 10 \sqrt{ \log(N) / d_{\min, N} }.
\end{equation}
We have $d_{\min , N}=\frac{N(p+r K)}{K}-(p+r)$. The condition $\frac{p}{K}+r \geq \frac{32\log N+1}{N}$ guarantees $d_{\min, N}\geq 32\log N$. By \cite[p.~34-37]{2011RoheSBM}, we have $\lambda_{K}^{\fancyAn}=\frac{ p }{ rK + p}$ and $\lambda_{K+1}^{\fancyAn}=0$. As such,
\beq
\begin{aligned}
&\| \sin \Theta(\V_{K}, \fancyV_{K}) \|_{\rm F} \\&
\leq \frac{20\sqrt{K}\sqrt{\log N}}{\sqrt{\frac{N(p+rK)}{K}-(p+r)} \, \frac{p}{p+rK}} 
\leq  \frac{25 K^{1.5}\sqrt{\log N}}{\sqrt{p}\sqrt{N-K}} ,
\end{aligned}
\eeq

Finally, we observe that with probability at least $1-2/N$,
\[ 
\|\V_{K}-\fancyV_{K}\fancyO_{K}\|_{\fro} \leq \sqrt{2} \|\sin \Theta(\V_{K}, \fancyV_{K})\|_{\fro}\leq
\frac{35K^{1.5}\sqrt{\log N}}{\sqrt{p}\sqrt{N-K}} .
\]

\bibliographystyle{IEEEtran}
\bibliography{ref}

\clearpage
\newpage
\setcounter{page}{1}

\title{Supplementary Material for ``Detecting Low Pass Graph Signals via Spectral Pattern: Sampling Complexity and Applications''}
\author{Chenyue Zhang, Yiran He, Hoi-To Wai\thanks{The authors are with the Department of SEEM, The Chinese University of Hong Kong, Shatin, Hong Kong SAR of China. E-mails: \url{czhang@se.cuhk.edu.hk}, \url{yrhe@se.cuhk.edu.hk}, \url{htwai@se.cuhk.edu.hk}. This work is supported in part by CUHK Direct Grant \#4055135 and HKRGC Project \#24203520.}}

{\let\newpage\relax\maketitle}
\appendix 
This supplementary document showcases the applicability of \Cref{algo:nec} for tackling \Cref{prob:det} with non-stationary graph signals, i.e., when ${\bm C}_x = \EE[ {\bm x}_m {\bm x}_m^\top ] \neq {\bm I}$.\vspace{.1cm}

\noindent \textbf{{\sf (I)} Observation Model.} 
We examine a general model of non-white excitation where $ {\bm x}_m = \B {\bm z}_m$. Without loss of generality, the latent parameter vector $ {\bm z}_m \in \RR^{ R}$ follows a zero-mean, sub-Gaussian distribution with the covariance $\EE[ {\bm z}_m {\bm z}_m^\top ] = {\bm I}$, and the matrix $\B \in \RR^{N \times R}$, $R \leq N$. Note that this implies the excitation covariance $\EE[ {\bm x}_m {\bm x}_m^\top ] = {\bm B} {\bm B}^\top$.
The observed signal in Eq.~\eqref{eq:y_sec} can be rewritten as:
\beq \label{eq:y_sec_B}
{\bm y}_m = \underbrace{\HS \cdots \HS}_{J_m~\text{times}}\B {\bm z}_m + {\bm w}_m,~m=1,...,M,
\eeq
with the noiseless signal covariance matrix:
\beq \label{eq:cov_y_nl_B} \textstyle 
\begin{aligned}
\CovY & \textstyle = \frac{1}{J}\sum_{\tau=1}^J [\HS]^\tau \B \B^\top [\HS]^\tau \\
&= \V \left( \frac{1}{J} \sum_{\tau=1}^J h( \bm{\Lambda} )^{ \tau} 
\V^\top \B \B^\top \V h( \bm{\Lambda} )^{ \tau}\right) \V^\top .
\end{aligned}
\eeq

We now illustrate that \eqref{eq:cov_y_nl_B} admits the spectral pattern described in Section~\ref{sec:det} in an approximate fashion provided that the graph filters have sufficiently `sharp' frequency responses (to be discussed later) and $K \leq R$. Note that this enables us to tackle \Cref{prob:det} using \Cref{algo:nec}. 
To keep the discussion simple, we focus on the case where $J=1$. The noiseless covariance matrix can be written as:
\beq \label{eq:covy_def_B}
\CovY =\V \left(  h( \bm{\Lambda} )
\V^\top \B \B^\top \V h( \bm{\Lambda} )\right) \V^\top=\barUU \barH \barUU^\top.
\eeq
where $\barH = {\rm Diag}( \beta_1(\CovY), \ldots, \beta_N( \CovY ) )$ is a diagonal matrix for the eigenvalues of $\CovY$ sorted in descending order, and $\barUU$ is the eigenvector matrix of $\CovY$. We consider two hypothesis for $\HS$ as follows. \vspace{.1cm}

\noindent \underline{\emph{1) When $\Tgnd = {\cal T}_{0}$}}: Suppose that the graph filter is sufficiently `sharp', i.e.,
\beq \label{eq:lowpass_B}
\eta_K = \frac{ \max \{ |h(\lambda_{K+1})|, \ldots, |h(\lambda_N)| \} }{ \min\{ |h(\lambda_1)|, \ldots, |h(\lambda_K)| \}  } \ll 1.
\eeq 
In this case, we have the approximation 
\[
\CovY \approx \V_K \left(  h( \bm{\Lambda_K} )
\V_K^\top \B \B^\top \V_K h( \bm{\Lambda_K} )\right) \V_K^\top.
\]
Consequently, it can be deduced that the top-K eigenvectors, $\hatU_K$, of the observation covariance matrix satisfies ${\rm span} (\hatU_K) \approx {\rm span} (\V_K)$. It follows that both $\Kmeans( \hatU_K ), \Pos ( \hatsu_1 )$ will be small when $\Tgnd = {\cal T}_0$.\vspace{.1cm}


\noindent \underline{\emph{2) When $\Tgnd = {\cal T}_{1}$}}: Similar to the previous case, we require the graph filter to satisfy the following sharpness condition:
\beq \label{eq:highpass_B}
\frac{|h_{K+1}|}{|h_K|} = \frac{|h( \lambda_{\pi(K+1)}|}{|h( \lambda_{\pi(K)}|} \ll 1,
\eeq 
where we recall the definition of $\pi(\cdot)$ such that $|h_i| = |h(\lambda_{ \pi(i) })|$ and $|h_i|$ is the $i$th highest frequency response of the graph filter. Similarly, we observe that $\CovY$ is approximated by 
\[
\V_{\pi([K])} \left( h( \bm{\Lambda_{\pi([K])}} )
\V_{\pi([K])}^\top \B \B^\top \V_{\pi([K])} h( \bm{\Lambda_{\pi([K])}} ) \right) \V_{\pi([K])}^\top
\]
Consequently, the set of top-K eigenvectors, $\hatU_K$, of the observed covariance matrix satisfies ${\rm span}( \hatU_K ) \approx {\rm span} ( \V_{\pi([K])})$. Since $\V_{\pi([K])}$ contains an eigenvector $\sv_i$ with $i \in {\cal P}^{\sf high}$ \eqref{eq:eigset}, it follows that $\Kmeans( \hatU_K ), \Pos( \hatsu_1 )$ are likely to be bounded away from zero when $\Tgnd = {\cal T}_1$.

The above derivation shows that if the admissible graph filters are sufficiently sharp, i.e., satisfying \eqref{eq:lowpass_B}, \eqref{eq:highpass_B}, then \Cref{algo:nec} is applicable to non-stationary graph signals with general excitation \eqref{eq:y_sec_B}. Note that this differs from the analysis for stationary graph signals in Section~\ref{sec:det}, \ref{sec:ana}, where the sharpness of graph filters only affects the sample complexity of the algorithm. To give further insight, we conclude this supplementary document by quantifying the effects of the sharpness of graph filters under finite number of samples. 
\vspace{.1cm}


\noindent \textbf{{\sf (II)} Performance Analysis when \texorpdfstring{$\mathbf{B}\neq \mathbf{I}$}{}.} For simplicity, we further concentrate on applying \Cref{algo:nec} with $K \geq 2$. Our techniques can be readily extended to other cases such as when $J\geq 2$ or $K=1$.
To fix ideas, the eigenvalues of $\HS$ are arranged in descending order as $|h_1| > \cdots > |h_N|$. The \emph{strength} (a.k.a.~sharpness) of the graph filter at the $i$th graph frequency is defined as $\bar{\eta}_i := |h_{i+1}|/|h_{i}| $. When $\HS$ is $K$-low pass, we observe that $\bar{\eta}_K$ coincides with the low pass ratio $\bar{\eta}_K = \eta_K < 1$. 

We consider a mild condition for $\B$ and $\HS$. Let ${\bm Q}_K$ be the top-$K$ right singular vectors of $\HS \B$. We require:
\begin{assumption} \label{ass:skeB}
Assume: {\sf (i)} $\operatorname{rank}(\HS \B) \geq K$, {\sf (ii)} $ \beta_K > \beta_{K+1}$, and 
{\sf (iii)} $\operatorname{rank} ( \barU_{K}^{\top} \B \Q_{K} )=K$.
\end{assumption}
\noindent We recall that $\beta_i := \beta_i( \CovY)$ as defined in \eqref{eq:covy_def_B}. The above conditions are relatively mild, e.g., they typically hold when ${\bm B}$ has full column rank. 

Note that the eigenvectors of $\CovY$ are now \emph{perturbed} version for the columns of $\{\sv_1,\ldots,\sv_N\}$. Recall that $\barU_K$ is the top-K eigenvectors of the noiseless covariance $\CovY$ of the \emph{stationary graph signals} in \eqref{eq:y_sec}, our first step is to control the difference between $\barU_K$ and $\barUU_{K}$:
\begin{Lemma}\label{lem:skechinf}
Fix $1 \leq K \leq R$, and assume \Cref{ass:skeB}. There exists an orthogonal matrix $\barO_K $ such that 
\beq
 \| \barUU_{K}  - \barU_{K} \barO_{K}^\top\|_{\fro}^2 \leq \bar{\eta}^{2}_K \, {2K} \ff.
\eeq
where $\ff = \|\barU_{N-K}^{\top} \B {\bm Q}_{K} \|_{2}^2 \| ( \barU_{K}^{\top} \B {\bm Q}_{K} )^{-1} \|_{2}^2$.  
\end{Lemma}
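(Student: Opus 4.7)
\textbf{Proof proposal for Lemma~\ref{lem:skechinf}.} The plan is to bound the principal angle between the top-$K$ left singular subspace of $\HS \B$ (which is $\barUU_K$) and the top-$K$ eigenspace of $\HS$ (which, by the discussion after \eqref{eq:covy_def_B}, coincides with $\barU_K = \U_K$), and then convert that canonical angle bound into a Frobenius-norm bound via a standard Davis--Kahan-type identity. The key observation is that, under Assumption~\ref{ass:skeB}, we can explicitly compute $\barU_{N-K}^\top \barUU_K$ and show it scales with $\bar{\eta}_K = |h_{K+1}|/|h_K|$.

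First I would exploit the fact that $\CovY = (\HS \B)(\HS \B)^\top$, so $\barUU_K$ and $\Q_K$ are the top-$K$ left and right singular vectors of $\HS \B$, with $\HS \B \Q_K = \barUU_K \tilde{\Sigma}_K$ and $\tilde{\sigma}_K^2 = \beta_K$. Using $\HS = \barU_K \bm h_K \barU_K^\top + \barU_{N-K} \bm h_{N-K} \barU_{N-K}^\top$ and the orthogonality $\barU_{N-K}^\top \barU_K = \mathbf 0$, a direct left-multiplication by $\barU_{N-K}^\top$ gives the identity
\begin{equation*}
\barU_{N-K}^\top \barUU_K \;=\; \bm h_{N-K}\, \barU_{N-K}^\top \B \Q_K \, \tilde{\Sigma}_K^{-1}.
\end{equation*}
Since $\|\sin \Theta(\barUU_K, \barU_K)\|_{\fro} = \|\barU_{N-K}^\top \barUU_K\|_{\fro}$ and $\bm h_{N-K}$ has spectral norm $|h_{K+1}|$, this yields
\begin{equation*}
\|\sin \Theta(\barUU_K, \barU_K)\|_{\fro} \;\leq\; \sqrt{K}\,|h_{K+1}|\,\|\barU_{N-K}^\top \B \Q_K\|_{2}\,\big\| \tilde{\Sigma}_K^{-1}\big\|_{2}.
\end{equation*}

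Next I would establish the lower bound $\tilde{\sigma}_K \geq |h_K|\,\sigma_{\min}(\barU_K^\top \B \Q_K)$. Writing $\tilde{\Sigma}_K^2 = \Q_K^\top \B^\top \HS^2 \B \Q_K = (\barU_K^\top \B \Q_K)^\top \bm h_K^2 (\barU_K^\top \B \Q_K) + (\barU_{N-K}^\top \B \Q_K)^\top \bm h_{N-K}^2 (\barU_{N-K}^\top \B \Q_K)$ exhibits this as a sum of two PSD matrices, so its smallest eigenvalue is lower-bounded by that of the first term alone. The first term is at least $h_K^2\,\sigma_{\min}(\barU_K^\top \B \Q_K)^2$, and the $K\times K$ matrix $\barU_K^\top \B \Q_K$ is invertible by Assumption~\ref{ass:skeB}(iii), so $\sigma_{\min}(\barU_K^\top \B \Q_K) = \| (\barU_K^\top \B \Q_K)^{-1} \|_2^{-1}$. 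Plugging this into the previous display gives
\begin{equation*}
\|\sin \Theta(\barUU_K, \barU_K)\|_{\fro}^2 \;\leq\; K \,\bar{\eta}_K^2\, \ff.
\end{equation*}
Finally, I would invoke the standard identity (e.g., Yu--Wang--Samworth) that for any two orthonormal matrices there exists an orthogonal $\barO_K$ with $\|\barUU_K - \barU_K \barO_K^\top\|_{\fro} \leq \sqrt{2}\,\|\sin \Theta(\barUU_K,\barU_K)\|_{\fro}$, producing the claimed factor of $2K$.

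The main technical obstacle is the singular-value lower bound step: naive Weyl-type bounds would compare $\tilde{\sigma}_K$ to $|h_K|\sigma_K(\B)$ and lose the geometry of the interaction between $\B$ and the eigenbasis $\U$ (which is precisely what is captured by the factor $\ff$). The PSD decomposition above is the clean way around this, but it requires care to ensure that the ``dropped'' cross term $\bm h_{N-K}^2$ piece is genuinely PSD (which it is, trivially) and that invertibility of $\barU_K^\top \B \Q_K$ is exactly the hypothesis needed to translate $\sigma_{\min}$ into the norm of an inverse. The remaining steps are essentially bookkeeping.
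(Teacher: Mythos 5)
Your proposal is correct and follows essentially the same skeleton as the paper's proof: a spectral perturbation bound on the top-$K$ subspace of $\CovY=(\HS\B)(\HS\B)^\top$ scaling with $\bar\eta_K\sqrt{\ff}$, followed by the standard $\sin\Theta$-to-Frobenius conversion that produces the $2K$ factor (the paper uses Lemma~7 of Gittens--Mahoney, you use the equivalent Yu--Wang--Samworth bound together with $\|\cdot\|_{\fro}\le\sqrt{K}\|\cdot\|_2$). The only difference is that the paper imports the key estimate $\|\U_K\U_K^\top-\barUU_K\barUU_K^\top\|_2\le\bar\eta_K\sqrt{\ff}$ wholesale from Proposition~1 of \cite{2018ToBlind}, whereas you re-derive it from scratch via the identity $\barU_{N-K}^\top\barUU_K=\bm h_{N-K}\barU_{N-K}^\top\B\Q_K\tilde\Sigma_K^{-1}$ and the PSD lower bound on $\tilde\sigma_K$ — a correct and self-contained reconstruction of that cited result.
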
 
\begin{proof}
 Applying \cite[Proposition 1]{2018ToBlind}, we obtain
\beq \label{eq:applem4to}
\| \U_{K} (\U_{K})^\top - \barUU_{K} \barUU_{K}^\top \|_{2}^2 \leq \bar{\eta}_{K}^2 \ff.
\eeq    
From \cite[Lemma 7]{Gittens2013ApproximateSC}, we then have $\|\barUU_{K}-\barU_{K} \barO_{K} \|_{\mathrm{F}}\leq \sqrt{2K} \|\barU_K \barU_K^\top -\barUU_K \barUU_K^\top \|_{2}$.
This concludes the proof.
\end{proof}
\noindent Note that in the above, we have $\ff = 0$ if $\B = \I$. The above lemma leads to the following theorem on the finite sample performance of \Cref{algo:nec}:
\begin{Theorem}\label{theo:Km_sample_B}
Let $G \sim {\sf SBM}(K, N, r, p)$ with $p \geq r > 0$, $\frac{p}{K}+r \geq \frac{32\log N+1}{N}$, \Cref{ass:skeB}  holds, and take $\GSO = \nL$ as the unweighted adjacency matrix. 
Accordingly, let 
\beq \notag
\begin{aligned}
\bareig & := \inf_{ \eigB_{K} : \HS \in {\cal H}_{\GSO}^{\sf low} \cup {\cal H}_{\GSO}^{\sf hi} } \hspace{-.2cm} \eigB_{K},~\bar{\eta} := \sup_{ \bar{\eta}_{K} :  \HS \in {\cal H}_{\GSO}^{\sf low} \cup {\cal H}_{\GSO}^{\sf hi} } \hspace{-.2cm} \bar{\eta}_K, \\
\delta_1 & \textstyle :=\sqrt{\frac{\delta}{2}- \frac{1225 K^{3} \log N}{p (N-{K})}-2K\bar{\eta}^2\ff},\\
\delta_2 & := \sqrt{{\rm c}_{\sf sbm}} -\sqrt{2K\ff}\bar{\eta}-\sqrt{\delta},
\end{aligned}
\eeq 
Assume that $\bareig > 0$, $\overline{\delta}_{\sf min} := \min \left\{\delta_1,\delta_2\right\}>0$,
and the noise variance $\sigma^{2}\leq \overline{\delta}_{\sf min} {\bareig} / {\sqrt{8K}}$. If it holds that
\beq \label{eq:sample_complexity_K_B}
\sqrt{\frac{M}{\log{M}}} \geq  
\frac{2 {\rm c}_{1} \operatorname{tr}(\CovYn)}{\overline{\delta}_{\sf min}\bareig / \sqrt{8K}-\sigma^{2} },
\eeq
then 
\beq 
{\rm Pr}( \hatTsuff = \Tgnd ) \geq 1 - 10/M - 2/N.
\eeq 
\end{Theorem}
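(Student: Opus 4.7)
The plan is to extend the proof of Theorem~\ref{theo:Km_sample} by inserting Lemma~\ref{lem:skechinf} as an additional telescoping step that accounts for the gap between $\barUU_{K}$ (the top-$K$ eigenvectors of the non-stationary $\CovY$ in \eqref{eq:cov_y_nl_B}) and $\barU_{K}$ (a column permutation of $\V_K$, i.e., the top-$K$ eigenvectors of the stationary covariance $\frac{1}{J}\sum_\tau [\HS]^{2\tau}$). Since Algorithm~\ref{algo:nec} thresholds on $\Kmeans(\hatU_K)$, I will establish an upper bound under ${\cal T}_0$ and a matching lower bound under ${\cal T}_1$, each through a chain $\hatU_K \to \barUU_K \to \barU_K \to \fancyV_K$ glued by the orthogonal matrices $\hatO_K$, $\barO_K$, $\fancyO_K$ from Lemmas~\ref{lem:sample}, \ref{lem:skechinf}, and \ref{lem:Vpopu}, respectively. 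The key observation is that the Davis--Kahan bound in Lemma~\ref{lem:sample} transfers unchanged to the non-stationary covariance once $\eigB_K$ is reinterpreted as the eigengap of the non-stationary $\CovY$, so that $\|\hatU_K - \barUU_K \hatO_K\|_\fro \leq 2\sqrt{2K}\,\Emc/\bareig$ under the uniform eigengap $\bareig$ in the theorem's statement.

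Under ${\cal T}_0$, $\HS$ is $K$-low pass, hence $\barU_K = \V_K \Pi_K$ for a permutation $\Pi_K$. Choosing $\widetilde{\bm O} = \fancyO_K \Pi_K \barO_K^\top \hatO_K$ so that the chain telescopes, the triangle inequality applied to $\|\hatU_K - \fancyV_K \widetilde{\bm O}\|_\fro$ yields three contributions: the sample term $T_1 = 2\sqrt{2K}\,\Emc/\bareig$, the non-stationarity term $T_2 = \sqrt{2K}\,\bar{\eta}\sqrt{\ff}$ from Lemma~\ref{lem:skechinf}, and the SBM concentration term $T_3 = 35\sqrt{K^3\log N / (p(N-K))}$ from Lemma~\ref{lem:Vpopu}. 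Squaring the resulting bound and invoking the sample complexity \eqref{eq:sample_complexity_K_B}, which implies $\Emc \leq \delta_1 \bareig/\sqrt{8K}$ and hence $T_1^2 + T_2^2 + T_3^2 \leq \delta/2$, lets us conclude $\Kmeans(\hatU_K) \leq \|\hatU_K - \fancyV_K \widetilde{\bm O}\|_\fro^2 < \delta$, so that $\hatTsuff = {\cal T}_0$.

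Under ${\cal T}_1$, $\barU_K$ must contain at least one eigenvector from $\{\sv_{K+1},\ldots,\sv_N\}$ (indexed by ${\cal P}^{\sf high}$ of \eqref{eq:eigset}), so Assumption~\ref{conj:eig} and the argument culminating in \eqref{eq:kmeans_lowerbd} yield $\Kmeans(\barU_K) \geq {\rm c}_{\sf SBM}$. The reverse-triangle bound
\[
\sqrt{\Kmeans(\hatU_K)} \;\geq\; \sqrt{\Kmeans(\barU_K)} - \|\hatU_K \hatO_K - \barU_K \barO_K^\top\|_\fro,
\]
combined with Lemmas~\ref{lem:sample} and \ref{lem:skechinf}, produces
\[
\sqrt{\Kmeans(\hatU_K)} \;\geq\; \sqrt{{\rm c}_{\sf SBM}} - 2\sqrt{2K}\,\Emc/\bareig - \sqrt{2K}\,\bar{\eta}\sqrt{\ff},
\]
and requiring this to exceed $\sqrt{\delta}$ is precisely the condition $\Emc \leq \delta_2 \bareig/\sqrt{8K}$, again guaranteed by \eqref{eq:sample_complexity_K_B}. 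Union-bounding over the two instances of Lemma~\ref{lem:sample} (one per hypothesis) and the single SBM event of Lemma~\ref{lem:Vpopu} gives the claimed probability $1 - 10/M - 2/N$. The principal obstacle is the orthogonal-matrix bookkeeping: the permutation $\Pi_K$, the Davis--Kahan rotation $\hatO_K$, the skewed-subspace rotation $\barO_K$ from \cite{2018ToBlind}, and the SBM alignment $\fancyO_K$ must compose so that each piece of the triangle inequality collapses to a frame-invariant subspace distance, with Lemma~\ref{lem:skechinf}'s $\sqrt{2K}\,\bar{\eta}\sqrt{\ff}$ appearing only linearly in the final budget for both the $\delta_1$ and $\delta_2$ thresholds.
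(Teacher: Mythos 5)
Your proposal follows essentially the same route as the paper's proof: the same three-link chain $\hatU_K \to \barUU_K \to \barU_K \to \fancyV_K$ glued by the rotations from Lemmas~\ref{lem:sample}, \ref{lem:skechinf}, and \ref{lem:Vpopu}, the same upper bound under ${\cal T}_0$ and reverse-triangle lower bound under ${\cal T}_1$ matched to $\delta_1$ and $\delta_2$, and the same union bound for the failure probability. The only blemish is the constant in the squared triangle inequality for three terms (your $T_1^2+T_2^2+T_3^2\leq\delta/2$ step needs a factor accounting for $(a+b+c)^2\leq 3(a^2+b^2+c^2)$), but the paper's own displayed bound carries the same imprecision, so this does not distinguish your argument from theirs.
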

\begin{proof}
We proceed by modifying the proof of \Cref{theo:Km_sample} in \cref{app:pftheoKm_sam}.\vspace{.1cm}

\noindent \underline{\emph{1) When $\Tgnd = {\cal T}_{0}$}}:  Define the orthogonal matrices $\tilde{\bm O}_{K}= \barO_K \hatO_{K}$, $\tilde{\fancyO}_{K}= \fancyO_K \Pi_K\tilde{\bm O}_{K}$, where $\fancyO_K,\hatO_K, \barO_K$ are from Lemma~\ref{lem:Vpopu}, \ref{lem:sample}, \ref{lem:skechinf}, respectively. Combining \eqref{eq:kmeans_upperbd} with \cref{lem:skechinf}, it holds with probability at least $1-2/N-5/M$ that
\beq \label{eq:kmeans_upperbd_B}
\begin{split}
& \Kmeans( \hatU_K ) \leq \|\hatU_{K}-\fancyV_{K} \tilde{\fancyO}_{K}\|^2_{\fro} \\
& \leq 2 \|\hatU_{K}-\barUU_{K} \hatO_K\|^2_{\fro} + 2 \|\barUU_{K} \hatO_K-\barU_{K} \tilde{\bm O}_{K}\|^2_{\fro}\\
&+2 \|\barU_{K} \tilde{\bm O}_{K}-\fancyV_{K} \tilde{\fancyO}_{K}\|^2_{\fro} \\
& \leq 16 K \left( \frac{\Emc}{ \bareig } \right)^2+
4K\bar{\eta}^2\ff
+ \frac{2450 K^3 \log N}{  p( N-K ) }.
\end{split}
\eeq 

\noindent \underline{\emph{2) When $\Tgnd = {\cal T}_1$}}: Combining \eqref{eq:T1rs1} with \Cref{lem:skechinf}, we have
\begin{equation*}
\begin{aligned}
&\sqrt{ \Kmeans(\hatU_K) } = \sqrt{ \Kmeans(\hatU_K\hatO_{K}) } \\
& \geq \sqrt{ \Kmeans(\barUU_K) }- \| \barUU_K-\hatU_K\hatO_{K} \|_{\fro} \\
& \geq \sqrt{ \Kmeans(\barU_K\barO_{K}) }- \| \barU_K\barO_{K}-\barUU_K \|_{\fro}-\| \barUU_K-\hatU_K\hatO_{K} \|_{\fro} \\
& \geq \sqrt{ \Kmeans(\barU_{r,s}) }- \| \barU_K\barO_{K}-\barUU_K \|_{\fro}-\| \barUU_K-\hatU_K\hatO_{K} \|_{\fro}\\
\end{aligned}
\end{equation*}
As it can be shown from \eqref{eq:sample_complexity_K_B} that $\sqrt{{\rm c}_{\sf SBM}} > \frac{2^{3/2}\sqrt{K}}{ \bareig } \Emc+\bar{\eta}_{K}\sqrt{2K\ff}$, we have the following lower bound that holds with probability at least $1-5/M$,
\beq \label{eq:kmeans_lowerbd_B}
\Kmeans(\hatU_K) \geq \left( \sqrt{{\rm c}_{\sf SBM}} - \frac{2^{3/2}\sqrt{K}}{ \bareig } \Emc-
\bar{\eta} \sqrt{2K\ff}
\right)^2.
\eeq 
Collecting \eqref{eq:kmeans_upperbd_B}, \eqref{eq:kmeans_lowerbd_B} concludes the proof.
\end{proof}

The above result highlights a salient difference between the case of non-stationary graph signals \eqref{eq:y_sec_B} and stationary graph signals \eqref{eq:y_sec} considered in the main paper. Particularly, satisfying the assumptions in \Cref{theo:Km_sample_B} necessitates $\sqrt{ {\rm c}_{\sf SBM} } > \sqrt{2K \ff} \bar{\eta}$. That is, the sharpness of the graph filters in consideration has to be higher than a certain threshold specified by the non-clusterability of the bulk eigenvectors, \emph{regardless} of the number of sample observed and the noise variance. This restriction is not found in the case of stationary graph signals \Cref{theo:Km_sample}, and it can be understood from the intuition that follows \eqref{eq:lowpass_B}, \eqref{eq:highpass_B}. In particular, when $\bar{\eta} \approx 1$, the approximation of the top-K eigenvectors in $\CovY$ as $\V_{ \pi( [K] ) }$ is no longer valid and the top-K eigenvectors in $\CovY$ suffers from an \emph{intrinsic perturbation} from $\V_{ \pi( [K] ) }$ due to the effects of the matrix ${\bm B}$.

\end{document}